\newtheorem{theorem}{Theorem}[section]
\newtheorem{proposition}[theorem]{Proposition}
\newtheorem{definition}[theorem]{Definition}
\newtheorem{remark}[theorem]{Remark}
\newtheorem{example}[theorem]{Example}
\def\NN{{\mathbb N}}
\def\KK{{\mathbb K}}
\def\FF{{\mathbb F}}
\def\ie{\hbox{i.e.}}
\def\Ker{{\mathrm{Ker\,}}}
\def\Im{{\mathrm{Im\,}}}
\def\Gr{Gr\"obner}
\def\cC{{\mathcal{C}}}
\def\cF{{\mathcal{F}}}
\def\cO{{\mathcal{O}}}
\def\GF{{\mathrm{GF}}}
\def\Trivium{{\sc Trivium}}
\def\Magma{{\sc Magma}}
\def\bX{{\bar{X}}}
\def\bR{{\bar{R}}}
\def\bT{{\bar{\mathrm{T}}}}
\def\T{{\mathrm{T}}}
\def\bV{{\bar{V}}}
\def\cO{{\mathcal{O}}}
\def\NRV{{\mathrm{NRV}}}
\def\Reduce{{\textsc{Reduce}}}
\def\GB{{\textsc{GrobnerBasis}}}
\def\GBElimLin{{\textsc{GBElimLin}}}
\def\ElimLin{{\textsc{ElimLin}}}
\def\MultiSolve{{\textsc{MultiSolve}}}
\def\StepSolve{{\textsc{StepSolve}}}
\begin{document}

\title[A multistep strategy for polynomial system solving $\ldots$] 
{A multistep strategy for polynomial system solving over finite fields
and a new algebraic attack on the stream cipher Trivium}

\author[R. La Scala]{Roberto La Scala$^*$}

\author[F. Pintore]{Federico Pintore$^{**}$}

\author[S.K. Tiwari]{Sharwan K. Tiwari$^{\dagger}$}

\author[A. Visconti]{Andrea Visconti$^{\ddagger}$}

\address{$^*$ Dipartimento di Matematica, Universit\`a degli Studi di Bari
``Aldo Moro'', Via Orabona 4, 70125 Bari, Italy}
\email{roberto.lascala@uniba.it}

\address{$^{**}$ Dipartimento di Matematica, Universit\`a degli Studi di Trento,
Via Sommarive 14, 38123 Povo, Italy}
\email{federico.pintore@unitn.it}

\address{$^{\dagger}$ Cryptography Research Centre, Technology Innovation Institute,
Abu Dhabi, United Arab Emirates}
\email{sharwan.tiwari@tii.ae}

\address{$^{\ddagger}$ CLUB -- Cryptography and Coding Theory Group,
Dipartimento di Informatica, Universit\`a degli Studi di Milano, 
Via Celoria 18, 20133 Milano, Italy}
\email{andrea.visconti@unimi.it}

\thanks{
The first author acknowledges the partial support of PNRR MUR projects ``Security and Rights
in the CyberSpace'', Grant ref.~CUP H93C22000620001, Code PE00000014, Spoke 3,
and ``National Center for HPC, Big Data and Quantum Computing'',
Grant ref.~CUP H93C22000450007, Code CN00000013, Spoke 10.
The same author was co-funded by Universit\`a di Bari, Horizon Europe Seeds project,
Grant ref.~PANDORA - S22 and ``Fondo acquisto e manutenzione attrezzature per la ricerca'',
Grant ref.~DR 3191. The second author was partially funded by the ``Research
for Innovation Project'' (POR Puglia FESR-FSE 2014/2024), Grant ref.~366ABF6C and
Ripple’s University Blockchain Research Initiative. The fourth author was partially
supported by project SERICS (PE00000014) under the NRRP MUR program funded by
the EU - NextGenerationEU}

\subjclass[2000] {Primary 13P15. Secondary 11T71, 12H10}

\keywords{Polynomial system solving; Finite fields; Cryptanalysis.}

\begin{abstract}
In this paper we introduce a multistep generalization of the guess-and-determine
or hybrid strategy for solving a system of multivariate polynomial equations
over a finite field. In particular, we propose performing the exhaustive evaluation
of a subset of variables stepwise, that is, by incrementing the size of such subset
each time that an evaluation leads to a polynomial system which is possibly unfeasible
to solve. The decision about which evaluation to extend is based on
a preprocessing consisting in computing an incomplete \Gr\ basis after the current
evaluation, which possibly generates linear polynomials that are used to eliminate
further variables. If the number of remaining variables in the system is deemed
still too high, the evaluation is extended and the preprocessing is iterated.
Otherwise, we solve the system by a complete \Gr\ basis computation.

Having in mind cryptanalytic applications, we present an implementation of this
strategy in an algorithm called \MultiSolve\ which is designed for polynomial systems
having at most one solution. We prove explicit formulas for its complexity
which are based on probability distributions that can be easily estimated
by performing the proposed preprocessing on a testset of evaluations for different
subsets of variables. We prove that an optimal complexity of \MultiSolve\
is achieved by using a full multistep strategy with a maximum number of steps
and in turn the standard guess-and-determine strategy, which essentially is
a strategy consisting of a single step, is the worst choice.
Finally, we extensively study the behaviour of \MultiSolve\ when performing
an algebraic attack on the well-known stream cipher \Trivium.
\end{abstract}

\maketitle

\begin{quotation}
{\em Per Maria Jos\'e, in memoriam.}
\end{quotation}


\section{Introduction}

One of the most challenging and useful tasks in Computational Algebra is solving
a non-linear system of multivariate polynomial equations over a finite field.
Applications range from Discrete Logarithm Problem \cite{APS} to SAT Problem \cite{Ma},
from the computation of Error Locator Polynomials \cite{OS} to the study of classical
solutions of Quantum Designs \cite{RM}. The problem of solving a Multivariate
Polynomial system over a finite field is called ``MP problem'' and it is known
to be NP-hard (see, for instance, \cite{NPMP}). Therefore, determining
the worst-case complexity of this problem has become of great relevance
within Complexity Theory. Furthermore, the assumed difficulty of the MP problem
has been exploited as the security backbone for numerous cryptographic schemes,
motivating further research on polynomial system solving algorithms for both
cryptanalytic and design purposes.

Among symbolic algorithms to solve the MP problem, \Gr\ bases generally represent
one of the most effective options. The current best-known methods include Faug\`ere's
$F_4,F_5$ algorithms \cite{F4,F5}, as well as Courtois et al.~XL algorithm \cite{CKPS}.
Both these approaches essentially rely on performing Gaussian elimination over a
``Macaulay matrix'' which is a natural way of representing polynomials as vectors
with respect to a monomial basis. Even in accordance with the classical Buchberger's
algorithm (see, for instance, \cite{AL}), the complexity of such methods is then affected
by the size of the largest Macaulay matrix involved in the computations, which depends
on a parameter known as the ``solving degree'' of the polynomial system (see, for instance,
\cite{CaGo}). The difficulty in precisely determining this degree in advance has led
to the introduction of the concept of ``semi-regular sequences'' \cite{BFSY,BFSS} which have
theoretically-computable solving degree. Alternative parameters to analyze the complexity
of \Gr\ bases algorithms have been also considered and we refer the interested reader
to \cite{CaGo} for an overview on these parameters which form a still active area of research.

It should be mentioned that there are other effective symbolic methods available
for the MP problem such as characteristic (triangular) sets \cite{Wu} and involutive
bases \cite{GB}. Among non-symbolic solvers for the binary field $\GF(2)$, we have
well-developed and widely used algorithms such as SAT solvers, besides new promising
methods as Quantum Annealing which are about to be fully exploited for the
MP problem \cite{TII}. Nevertheless, providing accurate estimates for the complexity
of solving a non-linear system over a finite field remains a problem largely open.
In the present paper we propose a reliable statistical way to estimate this
complexity when the polynomial system has at most one solution - a quite common
case in cryptography.

When solving an instance of the MP problem, an upper bound for the complexity is
clearly provided by the exhaustive evaluation of all variables. In real-world
applications, this approach is generally unfeasible because of a large number
of variables, as well as solving the given system by a single instance of
any existing solver. A standard approach is therefore to try a ``guess-and-determine''
or ``hybrid'' strategy which consists in combining the exhaustive evaluation
of a subset of variables with the solving of all resulting polynomial systems.
This sort of divide-and-conquer strategy generally reduces an unfeasible MP problem
to many feasible instances of it and hence its complexity is essentially
the number of such instances. Different applications of this strategy in the
cryptography context appear, for instance, in the papers \cite{BFP1,CKPS,HL,LSPTV,LST}.
In all these applications, one assumes that for a suitably large number of evaluated
variables the corresponding systems can be all solved in a reasonable time.
Let us call this standard strategy a ``one-step strategy''. Note that, in order to
enhance the impact of the evaluations in reducing the number of variables,
a kind of interreduction of the \Gr\ bases theory is usually performed after
the evaluation which can give rise to some linear polynomials that are used
to eliminate further variables. This trick was introduced for the first time
in cryptanalysis by Courtois et al.~\cite{CB} as the procedure \ElimLin. A main drawback
of a one-step strategy is that the number of required variables such that
all polynomial systems corresponding to their evaluations are feasibly solvable
may be quite large leading to a huge exponential number of solving instances.

In order to reduce the total number of polynomial systems to be solved, in this paper
we propose a ``multistep strategy'' where one starts with a limited number of evaluated
variables which is increased only if the number of remaining variables after performing
an evaluation and an incomplete \Gr\ basis computation stopped at some chosen degree $D$
(including elimination via the linear polynomials obtained) is strictly greater than
a given bound $B$. Such algorithm terminates because there is some final step where
we have enough evaluated variables so that the number of remaining variables is always
smaller or equal than $B$. The multistep strategy we propose is extremely flexible
and optimizable by tuning the pararameters $D$ and $B$ and the considered set
of evaluated variables. Moreover, we are able to prove that a multistep strategy
with maximum number of steps, that is, where variables are added for evaluation
one by one, is optimal with respect to the total number of systems to be solved.
In other words, the standard one-step strategy is the worst one with respect to this
number.

As a proof-of-concept, we have implemented an algebraic attack on the well-known stream
cipher \Trivium\ by means of the proposed multistep strategy. This cipher was proposed by
De Canni\`ere and Preneel in 2005 as a submission to the European project eSTREAM \cite{DCP}.
Indeed, \Trivium\ was one of the winners of the competition for the category of
hardware-oriented ciphers. The register of this cipher is a 288-bit string and
the initial state consists of 80 bits as a private key (i.e.~the secret seed), further
80 bits are the initial vector and the remaining bits are constant. \Trivium\
is designed to generate up to $2^{64}$ bits of keystream where the first keystream bit
is produced only after $4\cdot 288 = 1152$ warm-up updates of the initial state. Despite its
simple and elegant design, \Trivium\ has so far brilliantly resisted all cryptanalytic
attacks, none of which has a better complexity than brute force over the private key
only, that is, $2^{80}$. 

Amongst the algebraic cryptanalysis of \Trivium, one finds the attack of Raddum \cite{Rad}
which aims at recovering an internal state rather than the initial one by applying
techniques from graph theory to solve a system of 954 multivariate polynomial equations
in 954 variables, obtained from the knowledge of 288 keystream bits. The complexity
of this attack is estimated to be $2^{164}$ and it was further analysed in \cite{BorKnuMat}
by using Simulated Annealing.

Another algebraic attack to recover an internal state was proposed in \cite{HL}.
This attack exploits a standard one-step strategy which, for each evaluation
of 115 variables, runs a kind of \ElimLin\ procedure to further reduce the number
of variables. The obtained systems for a testset of different keys and evaluations have
at most 33 remaining variables and they are all solvable with a variant of the characteristic
set method for finite fields, called MFCS \cite{GaoHua}. In order to reduce the degree of the
equations, this attack uses only 190 bits from the keystream allowing possible spurious
solutions. 

Among other types of attacks, it is worth noting the one proposed by Maximov and Biryukov
\cite{MaxBir} which guesses the value of some specific state bits (or the products of
state bits) leading, in some cases, to solve of a system of linear equations rather than
a system of quadratic equations. This algebraic attack hence consists in solving approximately
$2^{83.5}$ sparse linear systems consisting of 192 equations. Due to sparsity, the complexity
for solving each of these systems is about $2^{16}$. Note that for the attack to be successful,
a rather prohibitive string of $2^{61.5}$ keystream bits is required.

Finally, for the differential cryptanalysis of \Trivium\ we mention the cube attack
proposed by Dinur and Shamir in \cite{Cube}, where the adversary determines special
polynomials, called superpolies, whose variables involve key bits.
One computes superpolies by summing over a set of initial vectors which is called
a ``cube''. In fact, in this attack the stream cipher is essentially treated as
a black-box that grants the opponent access to IVs and corresponding keystreams. Many
variations and improvements of such differential attack have targeted reduced variants
of \Trivium\ (see, for instance, \cite{Ped} for an overview). By reduced we mean any
modification of \Trivium\ which considers either a reduced number $N$ of warm-up updates
(with the resulting scheme usually named $N$-round \Trivium) or a smaller register
with simplified update equations. Among the most recent results of this line
of research. it is worth mentioning \cite{HeHuPrWa} where the superpolies are obtained
for up to $848$-round \Trivium.

The present paper is organized as follows. In Section 2 we present basic facts and
techniques for polynomial system solving over any finite field $\FF$. In particular,
we recall a polynomial bound for the solving degree of such polynomial systems
and we derive a single exponential asymptotic complexity formula for the computation
of a DegRevLex-\Gr\ basis of an ideal containing the field equations of $\FF$.
The exponent of this formula involves the square of the number of variables
and the cardinality of the field $\FF$. Finally, we prove an elimination theorem
for polynomial systems containing a set of explicit equations. In Section 3, we introduce
the \MultiSolve\ algorithm able to compute the $\FF$-solution set $V_\FF(J)$
of a polynomial ideal $J$ such that $V_\FF(J)\leq 1$. This algorithm implements
the multistep strategy we propose and we discuss how its parameters can modify
the behaviour of the algorithm. In Section 4 we study the complexity of the algorithm
\MultiSolve\ in terms of the number of calls to its core subroutines \GBElimLin\
(generation of linear polynomials and elimination by means of them) and \GB.
These numbers are obtained by explicit formulas that include some probability
distributions which can be reliably estimated by applying \GBElimLin\ on suitably
large testsets for different numbers of evaluated variables. We prove that a minimum
number of calls to \GB\ (or other polynomial system solvers) is obtained
for a full multistep strategy, that is, a strategy which consists in adding
a single variable to the evaluation set. In Section 5 we recall the notion
of difference stream cipher \cite{LSPTV,LST} as a suitable formalization of stream
ciphers based on feedback shift registers and in Section 6 we discuss possible algebraic
attacks on such ciphers. In Section 7 we describe the stream cipher \Trivium\
as a difference stream cipher and we compute the inverse of its state transition map
which allows an internal state attack. In Section 8 we present an extensive experimental
study of the complexity of performing such an algebraic attack on \Trivium\ by means
of the algorithm \MultiSolve. We achieve an average complexity of $2^{106.2}$
calls to \GB\ solver where all bases can be easily computed. This improves
any previous algebraic attack based on the knowledge of a reasonably short
keystream, such as the attack in \cite{HL} which requires solving $2^{115}$
polynomial systems. Confirming the security of \Trivium, our internal state
attack is still worse than brute force over its 80-bit private key.
Nevertheless, we are optimistic that ongoing enhancements of the polynomial
system solvers, coupled with a more refined tuning for \Trivium\ of the parameters
of the \MultiSolve\ algorithm, could lead to a further narrowing of the gap.
Some other conclusions are finally drawn in Section 9.

\section{Solving polynomial systems over finite fields}

In this section we briefly review some basic results about solving a polynomial
system with coefficients and solutions over a finite field, having a low number of such
solutions. This situation is quite natural in algebraic attacks on cryptosystems where
the key is generally determined in a unique way by the given data. We start fixing
some notations. Let $\FF = \GF(q)$ be a finite field and consider the polynomial
algebra $R = \FF[x_1,\dots,x_n]$ and the ideal
$L = \langle x_1^q - x_1, \ldots, x_n^q - x_n \rangle\subset R$. If $J\subset R$ is
an ideal and $\bar{\FF}$ is the algebraic closure of the field $\FF$, we denote
\[
V(J) = \{(a_1,\ldots,a_n)\in\bar{\FF}^n\mid f(a_1,\ldots,a_n) = 0\
\mbox{for all}\ f\in J\}.
\]
and $V_\FF(J) = V(J)\cap \FF^n$. 

The Nullstellensatz over finite fields (see, for instance, \cite{Gh}) implies immediately
the following result.

\begin{proposition}
\label{nullstell}
Let $J\subset R$ be an ideal. We have that $V(L) = \FF^n$ and $V_\FF(J) = V(J + L)$ where
$J + L$ is a radical ideal of $R$.
\end{proposition}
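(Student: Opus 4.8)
The plan is to prove the three assertions separately, the first two being formal and the third containing the actual content. First I would settle $V(L)=\FF^n$. Over the algebraic closure the univariate polynomial $x^q-x$ splits as $\prod_{a\in\FF}(x-a)$, so its roots in $\bar{\FF}$ are exactly the elements of $\FF$; equivalently, $b^q=b$ holds for $b\in\bar{\FF}$ if and only if $b\in\FF$. A point $(a_1,\dots,a_n)\in\bar{\FF}^n$ lies in $V(L)$ precisely when $a_i^q-a_i=0$ for every $i$, i.e. when every coordinate lies in $\FF$, whence $V(L)=\FF^n$. For the equality $V_\FF(J)=V(J+L)$ I would use the standard fact that the variety of a sum of ideals is the intersection of the varieties, $V(J+L)=V(J)\cap V(L)$; combining this with $V(L)=\FF^n$ gives $V(J+L)=V(J)\cap\FF^n=V_\FF(J)$ directly from the definition of $V_\FF$.

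The substantive claim is that $J+L$ is radical, and here I would invoke the Nullstellensatz over finite fields cited above (see \cite{Gh}) in its sharp form: writing $\mathcal{I}(S)=\{f\in R\mid f(s)=0\ \text{for all}\ s\in S\}$ for the vanishing ideal of a subset $S\subseteq\FF^n$, this result asserts that $\mathcal{I}(V_\FF(J))=J+L$. Granting it, radicality is immediate, because a vanishing ideal is always radical: if $f^k\in\mathcal{I}(S)$ then $f(s)^k=0$ in the field $\FF$ for every $s\in S$, forcing $f(s)=0$, so $f\in\mathcal{I}(S)$. Hence $\sqrt{J+L}=\sqrt{\mathcal{I}(V_\FF(J))}=\mathcal{I}(V_\FF(J))=J+L$.

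The hard part will be the one nontrivial ingredient hidden in the cited theorem, namely the inclusion $\mathcal{I}(V_\FF(J))\subseteq J+L$ (the reverse inclusion being clear since $J$ vanishes on $V_\FF(J)$ and $L$ vanishes on all of $\FF^n$). If one preferred a self-contained argument, I would instead show that $R/L$ is reduced and deduce radicality ring-theoretically: by the Chinese Remainder Theorem $\FF[x_i]/\langle x_i^q-x_i\rangle\cong\FF^q$, so that, iterating over the variables, $R/L$ is a finite product of copies of $\FF$ (concretely, the ring of $\FF$-valued functions on $\FF^n$), in which every ideal is radical. Under the inclusion-preserving bijection between ideals of $R$ containing $L$ and ideals of $R/L$, which satisfies $\sqrt{I}/L=\sqrt{I/L}$, the radicality of every ideal of $R/L$ forces every ideal of $R$ containing $L$ to be radical; in particular $J+L\supseteq L$ is radical. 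Either route reduces the proposition to this single reducedness/Nullstellensatz fact, with the first two assertions being purely formal.
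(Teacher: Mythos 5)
Your proof is correct and follows the same route as the paper, which gives no argument at all beyond asserting that the proposition is an immediate consequence of the Nullstellensatz over finite fields of \cite{Gh} --- exactly the fact $\mathcal{I}(V_\FF(J))=J+L$ that you invoke, with the two formal assertions about $V(L)$ and $V(J+L)$ filled in as you do. Your self-contained alternative for radicality via $R/L\cong\FF^{q^n}$ being a finite product of fields is a correct bonus, but not needed to match the paper.
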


Since $V(L) = \FF^n$, the generators of the ideal $L$ are called {\em field equations}.
An immediate consequence of the above result and the Nullstellsatz is the following result
which provides a method to solve a polynomial system with at most one solution over
the base field $\FF$.

\begin{proposition}
\label{uniqueGB}
Let $J\subset R$ be an ideal such that $\# V_\FF(J)\leq 1$.
Then, the (reduced) universal \Gr\ basis $G$ of the ideal $J + L$, that is, its \Gr\ basis
with respect to any monomial ordering of $R$ is
\[
G = 
\left\{
\begin{array}{cl}
\{x_1 - a_1,\ldots,x_n - a_n\} & \mbox{if}\ V_\FF(J) = \{(a_1,\ldots,a_n)\}, \\
\{1\} & \mbox{otherwise}.
\end{array}
\right.
\]
\end{proposition}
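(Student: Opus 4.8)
The plan is to treat the two alternatives of the statement separately, in each case using Proposition~\ref{nullstell} to pin down the ideal $J + L$ exactly and then to read off its reduced \Gr\ basis.

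First I would dispose of the case $V_\FF(J) = \emptyset$. By Proposition~\ref{nullstell} we have $V(J+L) = V_\FF(J) = \emptyset$, so the weak Nullstellensatz over $\bar{\FF}$ forces $J + L = R$, that is, $1 \in J + L$. The reduced \Gr\ basis of the unit ideal is $\{1\}$ with respect to every monomial ordering, which is exactly the second alternative.

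For the case $V_\FF(J) = \{(a_1,\ldots,a_n)\}$ I would again invoke Proposition~\ref{nullstell}: the variety $V(J+L)$ equals this single $\FF$-rational point and, crucially, $J+L$ is radical. The strong Nullstellensatz then yields $J + L = I(V(J+L))$, and the vanishing ideal of a single point $(a_1,\ldots,a_n)$ is the maximal ideal $\langle x_1 - a_1, \ldots, x_n - a_n\rangle$. Hence $J + L$ is generated precisely by the $n$ linear polynomials $x_i - a_i$.

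It then remains to verify that $\{x_1 - a_1,\ldots,x_n - a_n\}$ is the reduced \Gr\ basis with respect to an arbitrary monomial ordering, whence it is the universal one by uniqueness of reduced bases. Since $x_i \succ 1$ in every ordering, we have $\lm(x_i - a_i) = x_i$, and the leading monomials $x_1,\ldots,x_n$ are pairwise coprime; Buchberger's product criterion then guarantees that each $\spoly(x_i - a_i, x_j - a_j)$ reduces to zero, so the set is a \Gr\ basis. It is reduced because every generator is monic and its only monomials, $x_i$ and $1$, are divisible by no $x_j$ with $j \neq i$. I do not expect a genuine obstacle here: the whole argument rests on Proposition~\ref{nullstell} (radicality together with the identification of the two varieties), after which the conclusion follows from elementary \Gr\ basis bookkeeping that is, by design, uniform across all monomial orderings.
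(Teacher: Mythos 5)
Your argument is correct and follows exactly the route the paper intends: the paper states this proposition without proof, remarking only that it is an immediate consequence of Proposition~\ref{nullstell} and the Nullstellensatz, which is precisely the radicality-plus-variety-identification argument you spell out. Your additional verification that $\{x_1-a_1,\ldots,x_n-a_n\}$ is the reduced \Gr\ basis for every monomial ordering is the elementary bookkeeping the authors leave implicit, and it is carried out correctly.
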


Indeed, as a consequence of Proposition \ref{uniqueGB}, we can choose the most efficient
monomial orderings as DegRevLex to solve a polynomial system with a single or no solutions.
Moreover, when $V_\FF(J)$ consists of few solutions, note that the cost
for obtaining them is again essentially that of computing a DegRevLex-\Gr\ basis of $J + L$.
In fact, for computing $V_\FF(J) = V(J + L)$ one needs to convert such a basis into
a Lex-\Gr\ basis by means of the FGLM-algorithm \cite{FGLM} which has complexity $\cO(n k^3)$
where $k = \# V_\FF(J) = \dim_\FF R/(J + L)$. If the integer $k$ is small, such complexity
is dominated by the cost of computing the DegRevLex-\Gr\ basis.

A precise estimation of such cost is generally difficult but it is known that the worst case
for any monomial ordering is doubly exponential in the number of variables \cite{Du}.
If $\FF = \GF(q)$ and hence $V_\FF(J) = V(J + L)$ is a finite set, algorithms for computing
DegRevLex-\Gr\ bases using linear algebra have a complexity formula (see, for instance, \cite{BFSY,BFP2})
of the form
\begin{equation}
\label{GBcomp}
\cO\bigg(\binom{n + d_s}{d_s}^\omega\bigg)
\end{equation}
where $2 < \omega\leq 3$ is the linear algebra exponent, that is, $\cO(n^\omega)$ is
the complexity for solving a linear system in $n$ variables and $d_s$ is the {\em solving degree},
that is, the highest degree of the S-polynomials involved in a complete \Gr\ basis computation
(see, for instance, \cite{CaGo}).

Note that $\binom{n + d_s}{d_s}$ is the number of monomials in $n$ variables of degree $\leq d_s$.
It refers to the number of columns of the so-called ``Macaulay matrices'' where polynomials are
viewed as vectors of their coefficients with respect to the monomial basis.

If $d_s$ is constant with respect to the number of variables $n$, one obtains clearly
a polynomial complexity. If an ideal contains the field equations of $\FF$, the solving degree $d_s$
is bounded by a quadratic function of $n$, as established in the following consequence
of the Macaulay bound (see, for instance, Theorem 11 in \cite{CaGo}).

\begin{proposition}
Let $J = \langle f_1,\ldots,f_m \rangle$ be an ideal of $R = \FF[x_1,\ldots,x_n]$ $(\FF = \GF(q))$
and consider $L = \langle x_1^q - x_1, \ldots, x_n^q - x_n \rangle\subset R$. Put $d =
\max\{d_1,\ldots,d_m,q\}$ where $d_i = \deg(f_i)$. Then, the solving degree $d_s$ for computing
a DegRevLex-\Gr\ basis of $J + L$ is bounded as follows
\begin{equation}
d_s \leq (n + 1)(d - 1) + 1.
\end{equation}
By assuming that the generators of $J$ are in normal form modulo $L$, it holds that
$d\leq n(q-1)$ $(n\geq 2)$ and one has also the following bound
\begin{equation}
\label{macbnd}
d_s\leq n^2(q - 1) + n(q - 2).
\end{equation}
\end{proposition}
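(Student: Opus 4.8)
The plan is to derive both inequalities directly from the Macaulay bound for the solving degree recalled as Theorem 11 in \cite{CaGo}, applied to the natural generating set of $J + L$. This generating set consists of $f_1,\ldots,f_m$, of degrees $d_1,\ldots,d_m$, together with the $n$ field equations $x_1^q-x_1,\ldots,x_n^q-x_n$, each of degree $q$. Since there is at least one $f_i$, we have at least $n+1$ generators in the $n$-variable ring $R$, so the bound applies: writing $D_1\geq D_2\geq\cdots$ for the generator degrees arranged in non-increasing order, it gives
\[
d_s \leq 1 + \sum_{i=1}^{n+1}(D_i - 1).
\]

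For the first inequality I would simply observe that, by definition, $d=\max\{d_1,\ldots,d_m,q\}$ is the largest degree occurring among all generators, so each of the top $n+1$ degrees satisfies $D_i\leq d$. Substituting this yields
\[
d_s \leq 1 + \sum_{i=1}^{n+1}(D_i-1) \leq 1 + (n+1)(d-1) = (n+1)(d-1)+1,
\]
which is the asserted bound.

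For the refined bound \eqref{macbnd} I would first exploit the normal-form hypothesis. If each $f_i$ is reduced modulo $L$, then no variable occurs in $f_i$ with exponent $\geq q$, so every variable appears to a power at most $q-1$ and hence $\deg(f_i)\leq n(q-1)$. Moreover, for $n\geq 2$ one has $q\leq n(q-1)$, since $n(q-1)\geq 2(q-1)=2q-2\geq q$ for every $q\geq 2$; therefore $d=\max\{d_1,\ldots,d_m,q\}\leq n(q-1)$. Substituting $d\leq n(q-1)$ into the bound just established and expanding, a short computation gives $(n+1)(n(q-1)-1)+1 = n^2(q-1)+n(q-2)$, which is exactly \eqref{macbnd}.

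The only delicate point, and the one I would check most carefully, is that the hypotheses under which Theorem 11 of \cite{CaGo} yields the Macaulay bound are genuinely met by $J+L$ in this affine, inhomogeneous, finite-field setting, and that the bound is governed by the top $n+1$ generator degrees (reflecting the $n+1$ homogeneous coordinates after homogenizing in $\mathbb{P}^n$). The presence of the field equations among the generators is precisely what normally guarantees these hypotheses, so once the applicability of the cited theorem is confirmed, both bounds follow from the elementary degree estimates above.
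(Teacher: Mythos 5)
Your proposal is correct and follows the same route the paper intends: the paper offers no written proof, merely attributing the result to the Macaulay bound of \cite{CaGo} (Theorem 11) applied to the generating set of $J+L$ including the field equations, and your derivation supplies exactly the missing details — taking $l=n+1$ top degrees each bounded by $d$, the normal-form estimate $\deg f_i\leq n(q-1)$ together with $q\leq n(q-1)$ for $n\geq 2$, and the expansion $(n+1)(n(q-1)-1)+1=n^2(q-1)+n(q-2)$. Your closing caveat about verifying the hypotheses of the cited theorem is well placed, and is indeed discharged by the presence of the field equations among the generators.
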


We remark that, in many concrete instances, the above bounds are not at all tight. However,
they are useful to show that a DegRevLex-\Gr\ basis computation for an ideal containing
the field equations of a finite field has a single exponential complexity (see also Proposition 4
in \cite{FJ} and Theorem 5.2 in \cite{HL}).

\begin{proposition}
Let $n\geq \max\{2,q-1\}$. A $($linear algebra$)$ algorithm for computing a DegRevLex-\Gr\ basis
of an ideal $J + L\subset R$ satisfies the following asymptotic complexity
\begin{equation}
\label{GBexpcompl}
\cO(k^{k \omega})\ \mbox{where}\ k = n^2q.
\end{equation}
\end{proposition}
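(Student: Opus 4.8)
The plan is to combine the linear-algebra complexity estimate \eqref{GBcomp}, namely $\cO\bigl(\binom{n+d_s}{d_s}^\omega\bigr)$, with the Macaulay-type bound \eqref{macbnd} on the solving degree, and then to absorb everything into the single quantity $k = n^2q$. Since $\binom{n+d_s}{d_s}$ is monotonically increasing in $d_s$, I would first replace $d_s$ by its upper bound $n^2(q-1)+n(q-2)$ coming from \eqref{macbnd} (this uses, as in that proposition, that the generators of $J$ are in normal form modulo $L$ and that $n\geq 2$, both guaranteed here). It then suffices to prove the clean inequality $\binom{n+d_s}{d_s}\leq k^{\,k}$: raising to the power $\omega$ immediately yields $\binom{n+d_s}{d_s}^\omega\leq k^{k\omega}$, hence the claimed complexity $\cO(k^{k\omega})$.

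The arithmetic heart of the argument is the observation that the hypothesis $n\geq\max\{2,q-1\}$ is exactly what forces $n+d_s\leq k$. Indeed, writing $k=n^2q$ and invoking \eqref{macbnd}, I would compute
\[
k-(n+d_s)\ \geq\ n^2q-\bigl(n+n^2(q-1)+n(q-2)\bigr)\ =\ n(n-q+1).
\]
The assumption $n\geq q-1$ is equivalent to $q\leq n+1$, i.e.\ $n-q+1\geq 0$, so the right-hand side is nonnegative and therefore $n+d_s\leq k$. This is the one place where the precise form of the hypothesis is genuinely used: without $n\geq q-1$ the lower-order terms in \eqref{macbnd} would not be dominated by $n^2q$ and a spurious exponential factor would survive into the final bound.

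With $n+d_s\leq k$ in hand, the remainder is a short sequence of elementary estimates. Bounding a single binomial coefficient by the full power set, $\binom{n+d_s}{d_s}\leq 2^{\,n+d_s}$, and using $n+d_s\leq k$ gives $\binom{n+d_s}{d_s}\leq 2^{\,k}$. Finally $2^{\,k}\leq k^{\,k}$ holds for every $k\geq 2$, and here $k=n^2q\geq 8$ because $n\geq 2$ and $q\geq 2$, so the inequality is valid. Combining these, $\binom{n+d_s}{d_s}\leq k^{\,k}$, and raising to $\omega$ produces \eqref{GBexpcompl}.

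I do not expect a serious obstacle here. Once $n+d_s\leq k$ is established, the crude bounds $\binom{a}{b}\leq 2^{a}$ and $2^{k}\leq k^{k}$ do all the work, and the exponent $\omega$ is simply carried through. The only real subtlety is checking that the stated condition on $n$ relative to $q$ is the sharp requirement making the linear solving-degree estimate subordinate to $k=n^2q$; everything past that point is routine bookkeeping.
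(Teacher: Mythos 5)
Your proof is correct and follows the same overall strategy as the paper's --- substitute the Macaulay bound \eqref{macbnd} into the binomial complexity \eqref{GBcomp} and use the hypothesis $n\geq q-1$ to absorb $n+d_s$ into $k=n^2q$; indeed your computation $k-(n+d_s)\geq n(n-q+1)\geq 0$ is exactly the paper's observation that $(n^2+n)(q-1)\leq n^2q$ when $n\geq q-1$. Where you genuinely diverge is in how the binomial coefficient itself is estimated: the paper invokes Stirling's formula to obtain $\binom{n+d_s}{d_s}\leq e\,(n+d_s)^{n+d_s}$ and then bounds $(n+d_s)^{n+d_s}\leq k^k$, whereas you use the elementary inequality $\binom{a}{b}\leq 2^a$ followed by $2^k\leq k^k$. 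Your route is shorter, avoids the bookkeeping with the error terms in Stirling's approximation, and in fact establishes the sharper intermediate bound $\cO(2^{k\omega})$, of which the stated $\cO(k^{k\omega})$ is a (much weaker) consequence; the paper's Stirling-based estimate is the natural one if one wants the form $(n+d_s)^{n+d_s}$ explicitly, but for the proposition as stated your elementary bound does all the work. One minor point, which affects the paper's own proof equally: the bound \eqref{macbnd} is only available under the assumption that the generators of $J$ are in normal form modulo $L$, which you correctly flag but which is not literally part of the hypotheses of the proposition.
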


\begin{proof}
We start by applying the Stirling's formula to the factorials occurring in the binomial complexity
(\ref{GBcomp}). Recall this formula is
\[
\sqrt{2\pi n}\cdot e^{\frac{1}{12n + 1}}\cdot (n/e)^n \leq n! \leq
\sqrt{2\pi n}\cdot e^{\frac{1}{12n}}\cdot (n/e)^n.
\]
By applying the above estimate also to $(n + d_s)!$ and $d_s!$, we obtain
\[
\binom{n + d_s}{d_s} =
\frac{(n + d_s)!}{n! d_s!} \leq \sqrt{\frac{n+d_s}{2\pi n d_s}}\cdot e^{\frac{1}{12(n + d_s) + 1}
- \frac{1}{12n + 1} - \frac{1}{12d_s + 1}} \cdot \frac{(n+d_s)^{n+d_s}}{n^nd_s^{d_s}}.
\]
The first factor $((n+d_s)/(2\pi n d_s))^{1/2}$ is a non-increasing function bounded above by 1,
while the second factor $e^{\frac{1}{12(n+d_s)}-\frac{1}{12n+1}-\frac{1}{12d_s+1}}$ is bounded above
by $e$. Hence, we further obtain
\[
\binom{n + d_s}{d_s}\leq e\frac{(n+d_s)^{n+d_s}}{n^nd_s^{d_s}} \leq e (n+d_s)^{n+d_s}.
\]
Let us consider now the bound (\ref{macbnd}) for the solving degree. We obtain
\[
n + d_s\leq n + n^2(q - 1) + n(q - 2) = (n^2 + n)(q - 1).
\]
By observing that $(n^2 + n)(q - 1)\leq n^2q$ when $n\geq q-1$, we conclude
\[
\binom{n + d_s}{d_s}\leq e (n+d_s)^{n+d_s} \leq e (n^2q)^{n^2q}.
\]
\end{proof}

Note that better estimates of the solving degree $d_s$ are used within cryptography
when the generators of the ideal $J = \langle f_1,\ldots,f_m \rangle$ are considered
randomly generated semi-regular sequences (see, for instance, \cite{BFSS,BFP2}).
By means of such estimates and the complexity (\ref{GBcomp}) for a
DegRevLex-\Gr\ basis, an approximation of the computational effort for obtaining
the solution set $V_\FF(J)$ is provided and used to assess cryptosystem security.

\medskip
In the following sections, we make use of the idea that having linear equations
in a polynomial system, or more generally explicit equations, provides a method
to eliminate variables while preserving solutions. Since solving complexity heavily
depends on the number of variables, such an elimination can significantly improve
the search for solutions. The following results provide a formalization of this
idea and its effectiveness. Till the end of this section, we assume that $\FF$ is
any base field.

Let $R = \FF[x_1,\ldots,x_n], R' = \FF[x_1,\ldots,x_m]$ be polynomial algebras
with $n\geq m$. Let $f_1,\ldots,f_{n-m}\in R'$ and consider the algebra
homomorphism $\psi:R\to R'$ such that
\[
x_i\mapsto
\left\{
\begin{array}{cl}
x_i & \mbox{if}\ i\leq m, \\
f_{i - m} & \mbox{otherwise}.
\end{array}
\right.
\]
It holds clearly that $\Im\psi = R'$ and $I = \langle x_{m+1} - f_1,\ldots,x_n - f_{n-m} \rangle
\subset \Ker\psi$. By a rewriting process over the algebra $R$ modulo the ideal $I$,
we have that $\Ker\psi = I$. In other words, the quotient algebra $R/I$ is isomorphic to
the subalgebra $R'\subset R$ by the mapping $f + I\mapsto \psi(f)$.

\begin{proposition}
\label{elim0}
Let $I\subset J\subset R$ be an ideal. We have that the elimination ideal
$J\cap R'$ is equal to $\psi(J)$.
\end{proposition}

\begin{proof}
Since $\psi$ is the identity map on $R'$, we have that $J\cap R' =
\psi(J\cap R')\subset \psi(J)$. Let $f\in J$ and consider
$f' = \psi(f)\in\psi(J)\subset R'$. We want to prove that $f'\in J$.
Since $f'\in R'$ we have that $\psi(f') = f' = \psi(f)$ and hence
$f - f'\in I\subset J$. Because $f\in J$ we conclude that $f'\in J$
and therefore $\psi(J)\subset J\cap R'$.
\end{proof}

Consider the map $\varphi:\FF^m\to\FF^n$ such that
\[
(a_1,\ldots,a_m)\mapsto
(a_1,\ldots,a_m,f_1(a_1,\ldots,a_m),\ldots,f_{n-m}(a_1,\ldots,a_m)).
\]
Note that $\varphi$ is the injective polynomial map corresponding to the
surjective algebra homomorphism $\psi$, that is, $\psi(g)(a_1,\ldots,a_m) = 
g(\varphi(a_1,\ldots,a_m))$ for all $g\in R$ and $(a_1,\dots,a_m)\in\FF^m$.
Consider now the solution set $V(J) = \{(a_1,\ldots,a_n)\in\bar{\FF}^n\mid
f(a_1,\ldots,a_n) = 0\ \mbox{for all}\ f\in J\}$ where $\bar{\FF}$ is the
algebraic closure of the field $\FF$.

\begin{proposition}
\label{elim}
Let $I\subset J\subset R$ be an ideal. We have that $V(J) = \varphi(V(J\cap R')) =
\varphi(V(\psi(J)))$.
\end{proposition}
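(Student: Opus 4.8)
The plan is to reduce the two claimed equalities to a single one and then prove it by a double inclusion. Since Proposition \ref{elim0} already gives $J\cap R' = \psi(J)$, it suffices to establish $V(J) = \varphi(V(\psi(J)))$, where we now read $\varphi$ as the polynomial map $\bar{\FF}^m\to\bar{\FF}^n$ obtained from the same defining formula (the $f_j$ being polynomials, this extension is automatic). First I would record that the intertwining identity $\psi(g)(a) = g(\varphi(a))$ stated just before the proposition continues to hold for all $a\in\bar{\FF}^m$ and all $g\in R$: both $g\mapsto\psi(g)(a)$ and $g\mapsto g(\varphi(a))$ are $\FF$-algebra homomorphisms $R\to\bar{\FF}$ that agree on the generators $x_1,\ldots,x_n$, hence they coincide.

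For the inclusion $\varphi(V(\psi(J)))\subseteq V(J)$, I would take $b\in V(\psi(J))$ and an arbitrary $f\in J$. Then $\psi(f)\in\psi(J)$ vanishes at $b$, so by the intertwining identity $f(\varphi(b)) = \psi(f)(b) = 0$. As $f\in J$ was arbitrary, $\varphi(b)\in V(J)$.

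For the reverse inclusion $V(J)\subseteq\varphi(V(\psi(J)))$, the key observation is that the hypothesis $I\subset J$ forces every point of $V(J)$ to lie in the image of $\varphi$. Indeed, let $c = (c_1,\ldots,c_n)\in V(J)$. Since $I\subset J$, the point $c$ annihilates each generator $x_{m+j} - f_j$ of $I$, which means $c_{m+j} = f_j(c_1,\ldots,c_m)$ for $j = 1,\ldots,n-m$ (recall $f_j\in R'$ depends only on the first $m$ coordinates). Setting $b = (c_1,\ldots,c_m)$ we therefore get $\varphi(b) = c$. It then remains to check that $b\in V(\psi(J))$: for any $\psi(g)\in\psi(J)$ with $g\in J$, the intertwining identity yields $\psi(g)(b) = g(\varphi(b)) = g(c) = 0$, since $c\in V(J)$. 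Hence $b\in V(\psi(J))$ and $c = \varphi(b)\in\varphi(V(\psi(J)))$, which closes the argument.

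The step I expect to demand the most care is the passage from $\FF$ to its algebraic closure $\bar{\FF}$: the maps $\varphi$ and $\psi$ were introduced over $\FF$, whereas $V(J)$ lives in $\bar{\FF}^n$, so one must be explicit that the intertwining identity is a purely formal consequence of agreement on the generators and thus survives the base change. Everything else is a routine exploitation of the defining relations of $I$ and of the surjectivity of $\psi$ onto $R'$.
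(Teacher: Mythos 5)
Your proof is correct, and it follows the same overall double-inclusion strategy as the paper: the hypothesis $I\subset J$ forces the last $n-m$ coordinates of any point of $V(J)$ to equal $f_j$ evaluated at the first $m$, so $V(J)$ lies in the image of $\varphi$, and Proposition \ref{elim0} converts between $J\cap R'$ and $\psi(J)$. Where you diverge is in the mechanism for the inclusion $\varphi(V(\psi(J)))\subseteq V(J)$: the paper decomposes each $f\in J$ as $f = f' + g$ with $f' = \psi(f)\in J\cap R'$ and $g = f - f'\in I$, then evaluates the two pieces separately (using $f'(\varphi(\alpha')) = f'(\alpha')$ and $g(\varphi(\alpha')) = \psi(g)(\alpha') = 0$), whereas you apply the single intertwining identity $\psi(f)(b) = f(\varphi(b))$ directly to $f$, which collapses that half of the argument to one line and, as a bonus, makes visible that this inclusion does not actually need $I\subset J$. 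Your justification of the identity over $\bar{\FF}$ --- two $\FF$-algebra homomorphisms $R\to\bar{\FF}$ agreeing on the generators $x_i$ --- is sound and addresses a point the paper silently elides, namely that $\varphi$ and the identity $\psi(g)(a) = g(\varphi(a))$ are only stated over $\FF$ while $V(J)$ lives in $\bar{\FF}^n$. The trade-off is that the paper's version keeps everything phrased in terms of the elimination ideal $J\cap R'$, which is the object it wants to emphasize, while yours is the more economical and slightly more general argument.
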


\begin{proof}
Let $\alpha = (a_1,\ldots,a_n)\in V(J)$, that is, $f(\alpha) = 0$ for all $f\in J$.
If $\alpha' = (a_1,\ldots,a_m)$, it is clear that $f'(\alpha') = 0$ for all
$f'\in J\cap R'$, that is, one has that $\alpha'\in V(J\cap R')$. Since
$I = \langle x_{m+1} - f_1,\ldots,x_n - f_{n-m} \rangle\subset J$, it holds that
$a_{m+1} = f_1(\alpha'),\ldots,a_n = f_{n-m}(\alpha')$ and we conclude that
$V(J)\subset \varphi(V(J\cap R'))$.

Assume now that $\alpha' = (a_1,\ldots,a_m)\in V(J\cap R')$ and consider
$\alpha = (a_1,\ldots,a_n) = \varphi(\alpha')$. Let $f\in J$ and denote
$f' = \psi(f)\in \psi(J)$. By Proposition \ref{elim0}, we have that
$f'\in J\cap R'$ and since $\psi(f') = f' = \psi(f)$ it holds that
$f - f'\in I$. This implies that $f'(\alpha') = 0$ and $f = f' + g$ with
$g\in I$. Since $\alpha = \varphi(\alpha')$ we have that $f(\alpha) = 0$.
In fact, it holds that $g(\alpha) = g(\varphi(\alpha'))
= \psi(g)(\alpha') = 0$ because $g\in I = \Ker\psi$. We conclude that
$\varphi(V(J\cap R'))\subset V(J)$.
Finally, because $J\cap R' = \psi(J)$ by Proposition \ref{elim0},
it holds that $V(J) = \varphi(V(J\cap R')) = \varphi(V(\psi(J)))$.
\end{proof}

Informally, the above proposition states that the solutions of a polynomial system
containing some explicit equations $x_{m+i} = f_i$ can be all computed by extending
the solutions of the polynomial system obtained by eliminating the variables $x_{m+i}$.
Observe that if $\deg(f_i) > 1$, the advantage of eliminating variables from equations
is partially reduced by the growth of their degrees which may affect the solving
degree of the system. It is not the case when we eliminate via linear equations,
which is therefore the preferable situation. We elaborate and explore this viewpoint
in the next section.

\section{Solving by a multistep strategy}

The complexities in the previous section show that computing a \Gr\ basis for a quite
large number of variables may be a hard task. The amount of computation does not
change significantly if we substitute \Gr\ bases with other solvers like SAT, BDD, etc.~(see,
for instance, \cite{LSPTV}). Indeed, it is well-know that the problem of solving multivariate
polynomial systems over finite fields is an NP-hard one \cite{NPMP}. A useful technique consists
therefore in substituting an unfeasible \Gr\ basis computation with multiple feasible
computations for polynomial systems obtained from the original one by evaluating a subset
of variables over the base finite field. Of course, the exponential complexity is hidden here
in such exhaustive evaluation. This approach is usually called a {\em guess-and-determine}
or {\em hybrid strategy}, where the latter name is especially used for polynomial systems
that are randomly generated \cite{BFSS,BFP2}. In this case, by evaluating some variables
one essentially obtains again a random system whose complexity can be estimated
in a theoretical way. The computation of the total complexity may show that in some cases
the hybrid approach is faster than a single \Gr\ basis computation \cite{BFP1,BFP2}.

In this section, we essentially introduce a variant of this technique where given
polynomial systems are generally not random and hence the evaluation of a subset
of their variables may lead to systems with different behaviours. In particular,
the main property we analyze for the resulting systems is the number of linear
equations that can be obtained by an incomplete \Gr\ basis computation which is stopped
at some chosen degree $D$. In fact, the presence of these linear equations allows
the elimination of a further amount of variables, without increasing the degrees of the
system equations, with respect to the set of evaluated variables. The number of remaining
variables, $\NRV$ in short, in these systems gives us an idea of the complexity
we will face in solving them completely. Given a bound $0\leq B\leq n$, we decide
to compute a complete \Gr\ basis only if $\NRV\leq B$. Otherwise, we evaluate some
additional amount of variables and we compute again the $\NRV$ which is then compared to $B$.
Proceeding in this way, we are able to solve the given polynomial system with \Gr\ bases
over a number of variables which is always bounded by $B$. Assuming that each of these
\Gr\ basis computations can be performed in a reasonable time, the complexity is essentially
the number of them. We will show how to estimate in practice this number by means
of explicit formulas and simple statistics and how to optimize it by choosing appropriate
steps when adding new variables to evaluate. We call this approach a {\em multistep
strategy} because of such steps and because \Gr\ bases are also computed stepwise.
After the above high-level description of the strategy, in the following we formally
describe it.

Let $R = \FF[x_1,\ldots,x_n]$ be a polynomial algebra over the finite field $\FF = \GF(q)$.
Consider an ideal $J = \langle f_1,\ldots,f_m \rangle\subset R$ and assume that
$\# V_\FF(J)\leq 1$. Denote as usual $L = \langle x_1^q - x_1, \ldots, x_n^q - x_n \rangle$
and put $H = \{f_i\}\cup\{x_i^q - x_i\}$ which is a generating set of the ideal $J + L$.
Let $D > 0$ be an integer and denote by $\GB(H, D)$ the algorithm performing an incomplete
\Gr\ basis computation which is stopped when all S-polynomials of degree $\leq D$
have been consider at a current step. This algorithm corresponds to put
a bound on the size of the Macaulay matrix when using linear algebra methods.
If $D$ is suitably large, that is, it is a solving degree we have that $G = \GB(H, D)$
is a complete \Gr\ basis of $J + L$. In the considered case that $\#V_\FF(J)\leq 1$,
the complete \Gr\ basis $G$ contains only linear polynomials, namely $G = \{1\}$ or
$G = \{x_1 - a_1,\ldots,x_n - a_n\}$ (see Proposition \ref{uniqueGB}). Of course,
computing the complete \Gr\ basis is generally expensive because of the complexity (\ref{GBcomp})
and to know linear polynomials belonging to the ideal $J + L$ is useful for eliminating
variables, without increasing degrees, according to Proposition \ref{elim}. It is reasonable
therefore to compute $G = \GB(H, D)$, for a quite low degree $D$ in order to have a fast
computation, and use the linear polynomials that we possibly found in $G$ to eliminate variables
from the generators of $J + L$. If $G = \{1\}$ or $G = \{x_1 - a_1,\ldots,x_n - a_n\}$,
no further computation is needed. Otherwise, such a preprocessing will generally ease
the task of computing $V_\FF(J)$. We formalize it as the Algorithm 3.1.

\suppressfloats[b]
\floatname{algorithm}{Algorithm}
\begin{algorithm}
\caption{\GBElimLin}
\begin{tabular}{l@{\ }l}
\hspace{-16pt} {\bf Input}: & a \hfill generating \hfill set \hfill $H$ \hfill of \hfill an \hfill ideal \hfill
                $J + L$ \hfill ($\#V_\FF(J)\leq 1$) \hfill and \hfill an \hfill integer \\
                & $D > 0$; \\
\hspace{-16pt} {\bf Output}: & either \hfill a \hfill \Gr\ \hfill basis \hfill of \hfill $J + L$ \hfill or \hfill
                a \hfill generating \hfill set \hfill of \hfill an \hfill ideal \\
                & obtained \hfill by \hfill eliminating \hfill some \hfill variables, \hfill via \hfill
		linear \hfill equations, \hfill from \\
		& $J + L$. \\
\end{tabular}
\begin{algorithmic}[0]
\State $G:= \GB(H, D)$;
\If{$\max\{\deg(g)\mid g\in G\}\leq 1$}
\State \Return G;
\EndIf;
\State $G_1:= \{g\in G\mid \deg(g) = 1\}$;
\State $G_2:= G\setminus G_1$;
\State $S:= \{x_i^q - x_i\mid x_i\ \mbox{occurs in}\ G\}$;
\State $G:= \Reduce(G_2, G_1 \cup S)$;
\State $S:= \{x_i^q - x_i\mid x_i\ \mbox{occurs in}\ G\}$;
\State $G:= G\cup S$;
\State \Return $G$;
\end{algorithmic}
\end{algorithm}

In the algorithm \GBElimLin\ we assume that the linear polynomials of the set $G_1$ are
completely interreduced, that is, they are obtained in reduced echelon form in the Macaulay
matrix. The procedure \Reduce\ is the complete reduction algorithm of \Gr\ bases theory
(see, for instance, \cite{AL}). If it holds that $\max\{\deg(g)\mid g\in G\}\leq 1$, namely
$G = \{1\}$ or $G = \{x_1 - a_1,\ldots,x_n - a_n\}$, then $\GBElimLin$ provides immediately
$V_\FF(J) = V(J + L)$ because it happens that $D$ is a solving degree. Otherwise, we have
that the solution set of the ideal $J' + L' = \langle G\rangle$ ($G = \GBElimLin(H,D)$)
is immediately related to $V(J + L)$ according to Proposition \ref{elim}. In this case,
we define $\NRV > 0$ as the number of variables occurring in $G$.
Note that in order to compute all coordinates of a possible solution, we should output
also the set of linear equations $G_1$ from the procedure $\GBElimLin$. We refrain from
doing so because in the solving algorithm that we are about to present, we need this set
for a single call at $\GBElimLin$.

Since the incomplete \Gr\ basis computation $\GB(H, D)$ is performed in \GBElimLin\
and it is assumed that the complete \Gr\ basis of $H$ is full linear, the set of linear
generators $G_1$ is usually non-empty. Consequently, the number of remaining variables $\NRV$
is typically less than the number of variables in $H$ due to the elimination process
$\Reduce(G_2, G_1 \cup S)$.  Note that such elimination of variables by linear equations
preserves the degrees of the non-linear generators in $G_2$ or may even reduce them,
owing to the field equations in $S$. Therefore, the task of computing the solution set
$V_\FF(J')$ is easier than computing $V_\FF(J)$. This simplification is obtained at the price
of performing \GBElimLin\ which may be quite cheap whenever $D$ is a low degree.
This occurs, for instance, when $D = \max\{\deg(g)\mid g\in H\}$ and $q$ is a small integer.
Note that in this case, the procedure $\GB(H, D)$ is essentially a kind of interreduction
of \Gr\ bases theory and the procedure \GBElimLin\ practically coincides with linear algebra
based algorithm \ElimLin\ \cite{Ba,CB}.

Fix now a bound $0\leq B\leq n$ and assume that a complete \Gr\ basis over a number of variables
less than or equal to $B$ can be computed in a reasonable time. If the number of variables $n$
is quite large as in polynomial systems arising in cryptography, we cannot immediately apply
$\GBElimLin(H, D)$. Then, we add to the generating set $H$ of $J + L$ a set of evaluations
of some suitable amount of variables, say
\[
E = \{ x_1 - a_1,\ldots,x_k - a_k \}\ (a_i\in\FF)
\]
and compute $G = \GBElimLin(H\cup E, D)$. As previously noted, we work under the assumption that
the complete \Gr\ basis of $H$ and hence of $H\cup E$ is a linear one. More precisely, if $G'$ is the
complete \Gr\ basis of $H\cup E$, then either $G' = \{1\}$ or $G' = \{x_1 - a_1,\ldots,x_n - a_n\}$.
Although $D$ is generally lower than the solving degree for $H\cup E$, the incomplete \Gr\ basis $G''$
of $H\cup E$ computed in \GBElimLin\ approaches the linear basis $G'$ and includes $E$ whenever
$G''\neq \{1\}$. In other words, some new linear polynomials are usually computed in addition
to those in $E$, that is, $\NRV$ is typically less than $n - k$. This is especially evident
for non-random systems, such as those arising in the cryptanalysis of specific (stream) ciphers.

This suggests to moderate the number $k$ which determines the exponential complexity $q^k$
corresponding to the exhaustive evaluation of $k$ variables over the field $\FF = \GF(q)$. If $\NRV\leq B$,
we proceed with the computation of $\GB(G)$. Otherwise, we extend the evaluation set $E$ to
\[
E' = \{ x_1 - a_1,\ldots,x_l - a_l \}\ (k < l)
\]
and compute $G' = \GBElimLin(H\cup E', D)$. By iterating the above steps,
the solution set $V_\FF(J)$ ($\# V_\FF(J)\leq 1$) is obtained by adding each time
a small amount of variables to evaluate, till either all considered evaluations
are exhausted or the output of \GBElimLin\ or \GB\ ($\NRV\leq B$) is a \Gr\ basis $G$
such that $\max\{\deg(g)\mid g\in G\} = 1$. The Algorithm 3.3 implement this multistep
strategy together with a suitable management of the exhaustive evaluation of different
amounts of variables. A fundamental subroutine is the Algorithm 3.2.

\suppressfloats[b]
\floatname{algorithm}{Algorithm}
\begin{algorithm}
\caption{\StepSolve}
\begin{tabular}{l@{\ }l}
\ {\bf Input}: & a \hfill generating \hfill set \hfill $H$ \hfill of \hfill an \hfill ideal
\hfill $J + L$ \hfill ($\#V_\FF(J)\leq 1$), \hfill a \hfill subset \hfill $W\subset\FF^k$ \\
& ($1\leq k\leq n$) and three integers $k\leq l\leq n, 0\leq B\leq n, D > 0$;  \\
\ {\bf Output}: & either \hfill a \hfill pair \hfill ({\tt solution}, $G$) \hfill where \hfill
{\tt solution} \hfill is \hfill a \hfill character \hfill string \hfill and \\
& $G\neq \{1\}$ \hfill is \hfill a \hfill \Gr\ \hfill basis \hfill of \hfill $J + L$ \hfill
or \hfill a \hfill \Gr\ \hfill basis \hfill of \hfill an \hfill ideal \\
&  obtained \hfill by \hfill eliminating \hfill some \hfill variables \hfill from \hfill $J + L$,
\hfill or \hfill the \hfill pair \\
& ({\tt wild-set}, $W'$) \hfill where \hfill {\tt wild-set} \hfill is \hfill a \hfill character
\hfill string \hfill and \hfill $W'$ \hfill is \hfill a \\
& subset of $\FF^l$. \\
\end{tabular}
\begin{algorithmic}[0]
\If{$k\neq l$}
\State $W:= W\times \FF^{l-k}$;
\EndIf;
\State $W':= \emptyset$;
\ForAll{$(a_1,\ldots,a_l)\in W$}
\State $E:= \{ x_1 - a_1, \ldots, x_l - a_l \}$;
\State $G:= \GBElimLin(H \cup E, D)$;
\If{$\max\{\deg(g) \mid g\in G\} = 1$}
\State {\bf print} $E$;
\State \Return ({\tt solution}, $G$);
\EndIf;
\If{$G\neq \{1\}$}
\State $\NRV:=$ number of variables in $G$;
\If{$\NRV\leq B$}
\State $G:= \GB(G)$;
\If{$\max\{\deg(g) \mid g\in G\} = 1$}
\State {\bf print} $E$;
\State \Return ({\tt solution}, $G$);
\EndIf;
\Else
\State $W':= W'\cup\{(a_1,\ldots,a_l)\}$;
\EndIf;
\EndIf;
\EndFor;
\State \Return ({\tt wild-set}, $W'$);
\end{algorithmic}
\end{algorithm}

Note that \StepSolve\ may output a \Gr\ basis $G\neq\{1\}$ of an ideal obtained by
eliminating some variables from the input ideal $J + L$ by means of \GBElimLin.
By Proposition \ref{elim}, one obtains the actual \Gr\ basis
$G' = \{x_1 - a_1,\ldots,x_n - a_n\}$ of $J + L$, or equivalently $V_\FF(J) =
\{(a_1,\ldots,a_n)\}$, simply by extending the coordinates in $G$ via the linear equations 
computed within the procedure $\GBElimLin((H \cup E, D)$ where
$E = \{ x_1 - a_1, \ldots, x_l - a_l \}$ is printed by \StepSolve. Such equations
are indeed necessary in this unique case.

\suppressfloats[b]
\floatname{algorithm}{Algorithm}
\begin{algorithm}
\caption{\MultiSolve}
\begin{tabular}{l@{\ }l}
\hspace{-12pt} {\bf Input}: & a \hfill generating \hfill set \hfill $H$ \hfill of \hfill an \hfill ideal
\hfill $J + L$ \hfill ($V_\FF(J)\leq 1$), \hfill a \hfill sequence \hfill of \\
& integers $1\leq k_1 < \ldots < k_r\leq n$ and two integers $0\leq B\leq n, D > 0$; \\
\hspace{-12pt} {\bf Output}: & either \hfill a \hfill \Gr\ \hfill basis \hfill of \hfill $J + L$ \hfill
or \hfill a \hfill \Gr\ \hfill basis \hfill of \hfill an \hfill ideal \\
&  obtained \hfill by \hfill eliminating \hfill some \hfill variables, \hfill via \hfill linear \hfill
equations, \hfill from \\
& the ideal $J + L$.
\end{tabular}
\begin{algorithmic}[0]
\State \ForAll{$(a_1,\ldots,a_{k_1})\in\FF^{k_1}$}
\State $W:= \{(a_1,\ldots,a_{k_1})\}$;
\State $i:= 1$;
\While{$W\neq\emptyset$}
\State $(A_1,A_2):= \StepSolve(H,W,k_i,D,B)$;
\If{$A_1 = {\tt solution}$}
\State \Return $A_2$;
\EndIf;
\State $W:= A_2$;
\State $i:= i + 1$;
\EndWhile;
\EndFor;
\State \Return $\{1\}$;
\end{algorithmic}
\end{algorithm}

The algorithm \MultiSolve\ yields the desired output on the condition that
for each $(a_1,\ldots,a_{k_1})\in\FF^{k_1}$, following the call at \StepSolve\
for the final step $k_r$, one has that $A_2$ is the empty set when $A_1 = {\tt wild-set}$.
In fact, we are assuming that $1\leq k_r\leq n$ is the least integer such that
this holds. Of course, such an integer exists because for $k_r = n$ we have
the evaluation of all variables. Note that our solving algorithm only requires
the storage of the wild-set $W$ for each iteration of \StepSolve\ and for every
guess $(a_1,\ldots,a_{k_1})$. If a solution is found, the algorithm terminates
promptly, avoiding any further useless computations.

We remark that \MultiSolve\ generalizes the standard guess-and-determine
(or hybrid) strategy in the case that we have a single iteration, namely for $r = 1$
and $k_r = k_1$. The only difference with the standard strategy is that \MultiSolve\
compute the \Gr\ bases for all evaluations $(a_1,\ldots,a_k)\in\FF^k$ ($k = k_r = k_1$)
in two steps, first executing \GBElimLin\ and then by \GB. If $\NRV(\leq B)$ is fairly
smaller than $n - k$, this approach can reduce computing times, especially for
\Gr\ bases algorithms based on Macaulay matrices and linear algebra. Indeed,
this improvement of the standard strategy already appeared in the algorithm
\ElimLin\ \cite{Ba,CB}.

We mention that in many computer algebra systems, such as \Magma, a restart
mechanism is integrated into the \GB\ algorithm based on the generation of
a sufficiently large number of linear equations. To use it in our \MultiSolve\
algorithm, however, one should have the possibility to input the parameters $B$
and $D$ in order to stop a potentially unfeasible computation whenever
the degree $D$ is reached but $\NRV > B$. In this case, there should also be
instructions that increase the number of variables that are exhaustively evaluated.
In the next section we will show indeed that the algorithm \MultiSolve\ for $r > 1$
outperforms the standard strategy by reducing the total number of \Gr\ bases
required to compute the solution set $V_\FF(J)$.

Observe that if $B = 0$ the algorithm \MultiSolve\ computes the output only
by applying \GBElimLin, that is, by \Gr\ bases computations which are stopped
at degree $D$. This is possible because for a sufficiently large final step
$k_r\leq n$, the solving degree of all obtained systems becomes lower or equal
than $D$.

Before applying \MultiSolve, it would be helpful to run \GBElimLin\ on the
generating set $H$ of the ideal $J + L$. Indeed, this is generally unfeasible
if $H$ has a large number of variables. Nevertheless, if there are given linear
polynomials in $H$, a good practice consists in initially eliminating variables
by means of these generators. We remark that beside the choice of the
integers $D$ and $B$, a fundamental issue in the optimization of the algorithm
\MultiSolve\ consists in the choice of the subsets
\[
\{x_1,\ldots,x_{k_1}\}\subset \{x_1,\ldots,x_{k_2}\} \subset \ldots
\subset \{x_1,\ldots,x_{k_r}\}.
\]
In the next section, we will show that an optimal choice for the steps $k_i$ is to put
$k_{i-1} = k_i - 1$. The main issue is therefore to single out the subset $\{x_1,\ldots,x_{k_r}\}$
which can have the most favorable impact on the total running time of \MultiSolve.
We suggest that in the search for an optimal subset of variables to evaluate, one can use
as objective function the average value of the $\NRV$ numbers obtained by \GBElimLin\ for the
evaluations of $\{x_1,\ldots,x_{k_r}\}$ on a suitably large testset $T\subset\FF^{k_r}$.
Of course, if the number of such subsets of the set of all variables is huge, such an optimization
cannot be performed by brute force. In this case, we can possibly search among a restricted
amount of subsets which appear to be good candidates due to the specific form of the given
system. Devising more effective optimization strategies remains an open problem which
we defer to future work.

\section{\MultiSolve\ complexity}

In this section we analyze the complexity of the algorithm \MultiSolve\ which
essentially corresponds to the number of calls to the subroutines \GBElimLin\
and \GB. In fact, both these procedures can be executed in a reasonable time
for suitable values of the parameters $D > 0$ and $0\leq B\leq n$.

We call {\em $k$-guess} an evaluation of the first $k$ variables of the polynomial
algebra $R = \FF[x_1,\ldots,x_n]$ ($\FF = \GF(q)$), that is, to make a guess
$(a_1,\ldots,a_k)\in\FF^k$ means that we are computing modulo the ideal generated
by the set
\[
E = \{ x_1 - a_1,\ldots, x_k - a_k \}.
\]
A guess is called {\em wild} if the procedure $\GBElimLin(H\cup E, D)$ ($H$ is a
generating set of the ideal $J + L$) outputs a set of generators $G$ such that
$\max\{\deg(g) \mid g\in G\} > 1$ and its corresponding $\NRV$ is strictly greater
than $B$. Otherwise, we say that the guess is {\em tamed}. Note that
the latter includes the case that \GBElimLin\ outputs a complete \Gr\ basis,
that is, $\max\{\deg(g) \mid g\in G\}\leq 1$ because $V_\FF(J)\leq 1$.

Denote $p_B(k)$ the probability of wild $k$-guesses in the set $\FF^k$ of all $k$-guesses,
that is, $p_B(k) q^k$ is the number of wild $k$-guesses that the algorithm \GBElimLin\
determines in the set $\FF^k$. In practice, such probability is estimated on some random testset
$T\subset \FF^k$ of reasonable large size.

The results about the complexity of \MultiSolve\ we are going to present, are all based
on the safe assumption that extending a tamed $k$-guess to an $l$-guess ($k < l$)
we always obtain again a tamed $l$-guess. This can be explained because reducing the number
of variables in an ideal by evaluation also reduces its solving degree and the incomplete
\Gr\ basis computed by \GBElimLin\ approaches the complete one which is full linear.
A first consequence is that the map $k\mapsto p_B(k)$ is a decreasing function.

\begin{proposition}
Let $1\leq k_1\leq \ldots \leq k_r\leq n$. One has that $1\geq p_B(k_1)\geq \ldots \geq p_B(k_r)\geq 0$
where $p_B(n) = 0$.
\end{proposition}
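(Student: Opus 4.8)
The plan is to show that the single–integer function $k\mapsto p_B(k)$ is non-increasing on $\{1,\ldots,n\}$; since $p_B(k_i)$ depends only on the value $k_i$, this immediately gives the chain $p_B(k_1)\geq\cdots\geq p_B(k_r)$ for any weakly increasing sequence. The two boundary facts are handled separately. The inequality $p_B(k_1)\leq 1$ is automatic because $p_B(k_1)$ is a probability. For $p_B(n)=0$ I would observe that an $n$-guess evaluates every variable, so the generating set returned by $\GBElimLin$ contains no variable at all; hence its $\NRV$ equals $0\leq B$, the wildness condition $\NRV>B$ can never be met, and no $n$-guess is wild, giving $p_B(n)=0$.

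For the monotonicity, fix $k<l$ and write $W_k\subseteq\FF^k$ and $W_l\subseteq\FF^l$ for the sets of wild $k$- and $l$-guesses, so that $p_B(k)=|W_k|/q^k$ and $p_B(l)=|W_l|/q^l$. The key step is to read the standing assumption in its contrapositive form. The assumption states that extending a tamed $k$-guess by any choice of coordinates $a_{k+1},\ldots,a_l$ always yields a tamed $l$-guess; equivalently, if an $l$-guess $(a_1,\ldots,a_l)$ is wild, then its truncation $(a_1,\ldots,a_k)$ cannot be tamed, and is therefore a wild $k$-guess. Writing $\pi\colon\FF^l\to\FF^k$ for the projection onto the first $k$ coordinates, this says exactly that $\pi(W_l)\subseteq W_k$, equivalently $W_l\subseteq\pi^{-1}(W_k)$.

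I would then finish by a fiber count. Each wild $k$-guess $\alpha\in W_k$ has precisely $q^{\,l-k}$ preimages under $\pi$, so $|\pi^{-1}(W_k)|=q^{\,l-k}|W_k|$, and the inclusion above yields $|W_l|\leq q^{\,l-k}|W_k|$. Dividing by $q^l$ gives
\[
p_B(l)=\frac{|W_l|}{q^l}\leq\frac{q^{\,l-k}|W_k|}{q^l}=\frac{|W_k|}{q^k}=p_B(k),
\]
which is the desired monotonicity, and combined with the boundary values completes the proof.

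I do not expect a genuine obstacle in the main argument: once the safe assumption is rephrased contrapositively, everything reduces to the elementary counting inequality for the fibers of the coordinate projection. The only point demanding slight care is the boundary value $p_B(n)=0$, where one must confirm that substituting constants for all $n$ variables genuinely forces $\NRV=0$ in the output of $\GBElimLin$ — regardless of whether that output is a complete basis or merely a generating set with variables eliminated — since after every $x_i$ has been replaced by a field element no variable can survive in the resulting generators.
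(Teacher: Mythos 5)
Your proof is correct and follows essentially the same route as the paper: both arguments take the contrapositive of the standing assumption (a wild $l$-guess restricts to a wild $k$-guess), so that every wild $l$-guess lies among the $p_B(k)q^{l}$ extensions of wild $k$-guesses, giving $p_B(l)q^{l}\leq p_B(k)q^{l}$, and both dispose of $p_B(n)=0$ by noting that evaluating all variables leaves no wild guesses. Your phrasing via the projection $\pi$ and its fibers is just a cleaner packaging of the paper's counting step.
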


\begin{proof}
Let $1\leq k\leq l\leq n$. The number of all $l$-guesses extending wild $k$-guesses is
\[
p(k) q^k q^{l-k} = p(k) q^l.
\]
Since restricting a wild $l$-guess we always obtain a wild $k$-guess for the assumption
discussed above, we have that all wild $l$-guesses in the set $\FF^l$ are obtained by extending
some wild $k$-guess. This implies that $p(k) q^l\geq p(l) q^l$ and hence $p(k)\geq p(l)$.
Notice that some extensions of a wild $k$-guess may become tamed $l$-guesses.
In particular, it is clear that $p_B(n) = 0$ since we have the evaluation of all
variables of $R$.
\end{proof}

Starting from now, we denote
\[
k'' = \min\{1\leq k\leq n \mid p_B(k) = 0\}.
\]
Given a sequence of steps $1\leq k_1\leq \ldots \leq k_r = k''$, we want to compute
the total number of guesses that are considered in the algorithm \MultiSolve\ which coincides
with the number of calls to the subroutine \GBElimLin. We show that this number only depends
on the probabilities $p_B(k_1)\geq \ldots \geq p_B(k_{r-1})$ that can be estimated
on a testset without actually performing \MultiSolve.

\begin{proposition}
The number of guesses in the algorithm \MultiSolve\ for the steps
$1\leq k_1\leq \ldots \leq k_r = k''$ is
\[
C_1 = q^{k_1} + p_B(k_1)q^{k_2} + \ldots + p_B(k_{r-1}) q^{k_r}.
\]
\end{proposition}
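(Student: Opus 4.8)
The plan is to track, step by step, how many $k_i$-guesses the algorithm \MultiSolve\ actually evaluates, since each evaluated guess corresponds to exactly one call to \GBElimLin\ (as already observed in the text preceding the statement). The core bookkeeping fact is that the wild-set $W'$ returned by \StepSolve\ at step $i$ consists precisely of the wild $k_i$-guesses, and the monotonicity assumption used in the previous proposition (a wild $l$-guess always restricts to a wild $k$-guess for $k < l$) guarantees that extending this wild-set at the next step loses no wild guess.

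First I would handle the initial step. For $i = 1$, the input $W = \{\emptyset\}$ forces \StepSolve\ to set $W := \FF^{k_1}$, so the loop iterates over all $q^{k_1}$ elements of $\FF^{k_1}$, producing $q^{k_1}$ calls to \GBElimLin. By the definition of $p_B$, the returned wild-set equals the set of all wild $k_1$-guesses and has cardinality $p_B(k_1) q^{k_1}$. This is both the count of calls at step $1$ and the base case of an induction on $i$.

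Next I would argue by induction that, upon entering step $i \geq 2$, the current $W$ equals the set of all wild $k_{i-1}$-guesses, so $\# W = p_B(k_{i-1}) q^{k_{i-1}}$. Inside \StepSolve\ this set is replaced by $W \times \FF^{k_i - k_{i-1}}$, namely the set of all $k_i$-guesses restricting to a wild $k_{i-1}$-guess; its cardinality is $p_B(k_{i-1}) q^{k_{i-1}} \cdot q^{k_i - k_{i-1}} = p_B(k_{i-1}) q^{k_i}$, which is exactly the number of calls to \GBElimLin\ at step $i$. The monotonicity assumption then ensures that every wild $k_i$-guess restricts to a wild $k_{i-1}$-guess and so already lies in $W \times \FF^{k_i - k_{i-1}}$; hence the new wild-set collected by the loop equals the set of all wild $k_i$-guesses, which closes the induction. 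Summing the per-step counts gives $q^{k_1} + \sum_{i=2}^{r} p_B(k_{i-1}) q^{k_i}$, precisely the claimed $C_1$.

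The step I expect to require the most care is the inductive claim that no wild guess is ever dropped, since this is exactly where the monotonicity hypothesis is indispensable: without it some wild $k_i$-guess could be an extension of a \emph{tamed} $k_{i-1}$-guess, escaping the loop and corrupting both the count and the correctness of the final output. I would also remark that $p_B(k_r) = p_B(k'') = 0$ plays no role in the count itself, which stops at the term $p_B(k_{r-1}) q^{k_r}$; it serves only to guarantee that the terminal wild-set is empty, so that \MultiSolve\ exits with the intended result.
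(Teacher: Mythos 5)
Your proof is correct and follows essentially the same route as the paper's: the paper counts the calls for the first three steps explicitly (all $q^{k_1}$ guesses at step one, then $p_B(k_{i-1})q^{k_i}$ extensions at each later step, using the monotonicity assumption to identify the wild guesses inside the extension set with all wild $k_{i-1}$-guesses of $\FF^{k_{i-1}}$) and then says ``iterating,'' whereas you package the same bookkeeping as a clean induction on the step index. The only difference is presentational; the key invariant --- that the wild-set entering step $i$ is exactly the set of all wild $k_{i-1}$-guesses --- is identical in both arguments.
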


\begin{proof}
To simplify notations, let $k < l < m < \ldots$ be the steps of the algorithm \MultiSolve.
In the first step $k$, we have to consider all $k$-guesses whose number is $q^k$.
In the second step $l$, we have to extend to $l$-guesses all wild $k$-guesses whose
number is $p_B(k) q^k$. The number of such extensions is therefore
\[
p_B(k) q^k q^{l-k} = p_B(k) q^l.
\]
For the third step $m$, in this subset of all $l$-guesses extending wild $k$-guesses
we have now to consider wild $l$-guesses to be extended to $m$-guesses. Since by restricting
a wild $l$-guess we always obtain a wild $k$-guess, this number is exactly the number
of all wild $l$-guesses in the set $\FF^l$ which is $p_B(l) q^l$. Then, the number of
$m$-guesses which extends wild $l$-guesses (extending wild $k$-guesses) is
\[
p_B(l) q^l q^{m-l} = p_B(l) q^m.
\]
We deduce that up to the third step, the total number of guesses considered in the algorithm
\MultiSolve\ is
\[
q^k + p_B(k)q^l + p_B(l) q^m.
\]
Iterating for all steps of the algorithm, one obtains the complexity $C_1$.
\end{proof}

Let us consider now the number of times the subroutine \GB\ is executed in the algorithm
\MultiSolve, which generally gives main contribution to its total running time. By definition,
this number is the cardinality of tamed guesses considered in the algorithm.

\begin{proposition}
The number of tamed guesses in the algorithm \MultiSolve\ is
\begin{equation*}
\begin{gathered}
C_2 = (1 - p_B(k_1))q^{k_1} + (p_B(k_1) - p_B(k_2))q^{k_2} + \ldots \\
+\, (p_B(k_{r-2}) - p_B(k_{r-1})) q^{k_{r-1}} + p_B(k_{r-1}) q^{k_r}.
\end{gathered}
\end{equation*}
\end{proposition}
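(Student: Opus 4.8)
The plan is to count tamed guesses one step at a time and to reuse the bookkeeping already carried out for $C_1$. Every $k_i$-guess that \MultiSolve\ considers is classified inside \StepSolve\ as either \emph{tamed} (the branch in which \GB\ is invoked, together with the case in which \GBElimLin\ already returns a fully linear basis) or \emph{wild} (the branch in which the guess is appended to $W'$ and hence extended at the next step). Writing $g_i$ for the number of guesses considered at step $k_i$ and $w_i$ for the number of wild ones among them, the number $t_i$ of tamed guesses at that step is simply $t_i = g_i - w_i$, and the quantity to be computed is $C_2 = t_1 + \cdots + t_r$.

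First I would import from the previous proposition the per-step contributions whose sum is $C_1$, namely $g_1 = q^{k_1}$ and $g_i = p_B(k_{i-1})q^{k_i}$ for $i > 1$. Next I would determine $w_i$. Any guess considered at step $k_i$ is a point of $\FF^{k_i}$, so the wild ones among the considered guesses are in particular wild $k_i$-guesses; conversely, by the monotonicity assumption used throughout this section (restricting a wild guess always yields a wild guess), every wild $k_i$-guess of $\FF^{k_i}$ restricts to a wild $k_{i-1}$-guess and is therefore actually reached as one of the considered extensions. Hence the considered wild $k_i$-guesses coincide with the set of \emph{all} wild $k_i$-guesses of $\FF^{k_i}$, whose cardinality is $w_i = p_B(k_i)q^{k_i}$ by the definition of $p_B$.

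Subtracting then gives $t_1 = q^{k_1} - p_B(k_1)q^{k_1} = (1 - p_B(k_1))q^{k_1}$ and, for $1 < i < r$, $t_i = p_B(k_{i-1})q^{k_i} - p_B(k_i)q^{k_i} = (p_B(k_{i-1}) - p_B(k_i))q^{k_i}$. For the final step I would invoke the choice $k_r = k''$: since $p_B(k_r) = p_B(k'') = 0$, no considered guess is wild, so $w_r = 0$ and $t_r = g_r = p_B(k_{r-1})q^{k_r}$. Summing $t_1 + \cdots + t_r$ reproduces exactly the claimed expression for $C_2$.

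The only genuinely delicate point is the identification $w_i = p_B(k_i)q^{k_i}$, which rests on the monotonicity assumption: without it, not every wild $k_i$-guess would be guaranteed to appear among the extensions of wild $k_{i-1}$-guesses, and the clean telescoping structure of the $t_i$ would break down. The last step is likewise where the hypothesis $k_r = k''$ is essential, since it forces $w_r = 0$ and makes the final contribution the whole of $g_r$ rather than $g_r - w_r$.
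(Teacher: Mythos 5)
Your proof is correct and follows essentially the same route as the paper: the paper computes the total number of wild guesses $C = \sum_{i<r} p_B(k_i)q^{k_i}$ (using the same monotonicity-based identification of the considered wild $k_i$-guesses with all wild $k_i$-guesses) and sets $C_2 = C_1 - C$, which is exactly your per-step subtraction $t_i = g_i - w_i$ carried out in aggregate rather than termwise.
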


\begin{proof}
As shown in the above results, the number of wild guesses in algorithm
\MultiSolve\ only depends on the probability sequence
$1\geq p_B(k_1)\geq \ldots \geq p_B(k_{r-1})\geq p_B(k_r) = 0$.
This total number is
\[
C = p_B(k_1) q^{k_1} + p_B(k_2) q^{k_2} + \ldots + p_B(k_{r-1}) q^{k_{r-1}}.
\]
We conclude that the total number of tamed guesses is $C_2 = C_1 - C$.
\end{proof}

We can ask now if there is an optimal choice for the subset
$\{k_1,\ldots,k_r\}\subset\{1,\ldots,n\}$ with respect to the main complexity
formula $C_2$. Recall that in this formula the last step is fixed as
$k_r = k'' = \min\{1\leq k\leq n \mid p_B(k) = 0\}$. In other words,
the step $k''$ is the least integer such that $\GBElimLin(H\cup E, D)$ ($H$ is
a generating set of $J + L$ and $E = \{x_1 - a_1,\ldots,x_{k''} - a_{k''}\}$)
obtains $\NRV\leq B$ for all guesses $(a_1,\ldots,a_{k''})\in\FF^{k''}$.
We have then the following result.

\begin{theorem}
\label{triumph}
The minimum value of the complexity $C_2$ is obtained for $k_i = i$ $(1\leq i\leq k'')$,
that is, for the maximum number of steps.
\end{theorem}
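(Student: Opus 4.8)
The plan is to exploit the fact that the probabilities $p_B(k)$ are intrinsic to the system and do not depend on the chosen step sequence: once $B$, $D$ and the set of evaluated variables are fixed, $k \mapsto p_B(k)$ is a fixed decreasing function (by the preceding Proposition), so choosing a subset $\{k_1 < \cdots < k_r = k''\}$ merely samples this function at the chosen points. First I would rewrite $C_2$ in a uniform way by adopting the conventions $k_0 = 0$ with $p_B(k_0) = 1$ and $p_B(k_r) = p_B(k'') = 0$, so that
$$C_2 = \sum_{i=1}^{r} \bigl(p_B(k_{i-1}) - p_B(k_i)\bigr) q^{k_i},$$
which reproduces the displayed formula (the term $i = 1$ gives $(1 - p_B(k_1))q^{k_1}$ and the term $i = r$ gives $p_B(k_{r-1}) q^{k_r}$) and puts the first step on the same footing as all the others.

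The core of the argument is a \emph{refinement} (insertion) step: I would analyze how $C_2$ changes when one extra step $k^*$ is inserted between two consecutive steps $k_s < k^* < k_{s+1}$, allowing $s = 0$ (a new first step) thanks to the convention $p_B(k_0) = 1$. Only two summands are affected: the term attached to $k_{s+1}$ changes its left probability from $p_B(k_s)$ to $p_B(k^*)$, and a new term attached to $k^*$ appears. A short computation collapses the difference (new value minus old value) to
$$\bigl(p_B(k_s) - p_B(k^*)\bigr)\bigl(q^{k^*} - q^{k_{s+1}}\bigr).$$
Because $p_B$ is decreasing and $k_s < k^* < k_{s+1}$, the first factor is $\ge 0$; because $q \ge 2 > 1$ and $k^* < k_{s+1}$, the second factor is $< 0$. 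Hence every single insertion weakly decreases $C_2$.

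Finally I would conclude by refinement: any admissible sequence $1 \le k_1 < \cdots < k_r = k''$ can be turned into the full sequence $\{1, 2, \ldots, k''\}$ (which is precisely $k_i = i$) by inserting the missing integers one at a time, always keeping $k''$ as the last step. Since each insertion does not increase $C_2$, the full sequence attains the minimum, proving the theorem. I expect the only delicate point to be the algebraic simplification in the insertion step, namely getting the difference to factor cleanly as a nonnegative quantity times a nonpositive one; the monotonicity of $p_B$ and the reduction to the fully refined sequence are then immediate. Should the index bookkeeping feel awkward, a cosmetic alternative is to first apply Abel summation to rewrite $C_2 = q^{k_1} + \sum_{j=1}^{r-1} p_B(k_j)\bigl(q^{k_{j+1}} - q^{k_j}\bigr)$ and carry out the same insertion computation in that form.
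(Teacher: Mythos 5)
Your proposal is correct and follows essentially the same route as the paper: both arguments show that inserting a single extra step $k^*$ between two consecutive steps (or before the first step) can only decrease $C_2$, using the monotonicity of $p_B$ and the factorization of the resulting difference, and then conclude by refining any sequence to $\{1,\ldots,k''\}$. Your convention $p_B(k_0)=1$ merely unifies the paper's two cases (interior insertion versus insertion before the first step) into one computation, which is a tidy but inessential streamlining.
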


\begin{proof}
To prove the statement is true, it is sufficient to show that each time we add a new step
between two steps or before all steps, the sum $C_2$ decreases. We distinguish these two cases.

{\em Case 1}. Let $1\leq k < l < m\leq k''$ be three consecutive steps in the algorithm
\MultiSolve. In other words, we consider $l$ as the step that we possibly add in between
two steps. If all three steps occur, the formula $C_2$ contains the following two summands
\[
(p_B(k) - p_B(l))q^l + (p_B(l) - p_B(m))q^m.
\]
Otherwise, if step $l$ is missing, we have a single corresponding summand in $C_2$, namely
\[
(p_B(k) - p_B(m))q^m.
\]
Note that all remaining summands in $C_2$ are the same in these two subcases. Now, since
\[
(p_B(k) - p_B(m))q^m = (p_B(k) - p_B(l))q^m + (p_B(l) - p_B(m))q^m
\]
and $(p_B(k) - p_B(l))q^m\geq (p_B(k) - p_B(l))q^l$ because $p_B(k)\geq p_B(l)$ and $m > l$,
we conclude that
\[
(p_B(k) - p_B(m))q^m \geq (p_B(k) - p_B(l))q^l + (p_B(l) - p_B(m))q^m.
\]

{\em Case 2}. Let $1\leq k < l\leq k''$ and assume that either the step $k$ or $l$
is the first one. In other words, we consider $k$ as the step we possibly add before
all steps. If $k$ is the first step, the formula $C_2$ contains the summands
\[
(1 - p_B(k))q^k + (p_B(k) - p_B(l))q^l.
\]
Otherwise, if the step $l$ is the first one, that is, the step $k$ is missing, we have
a single corresponding summand in $C_2$, namely
\[
(1 - p_B(l))q^l.
\]
Similarly to case 1, one has that
\[
(1 - p_B(l))q^l = (1 - p_B(k))q^l + (p_B(k) - p_B(l))q^l
\]
where $(1 - p_B(k))q^l\geq (1 - p_B(k))q^k$ because $1\geq p_B(k)$ and $l > k$. We conclude
therefore that
\[
(1 - p_B(l))q^l\geq (1 - p_B(k))q^k + (p_B(k) - p_B(l))q^l.
\]
\end{proof}

Note that the above result shows in particular that the standard guess-and-determine strategy
which corresponds to apply \MultiSolve\ for the single step $k''$ is actually the worst one
with respect to complexity $C_2$. As for the first step, say $k'$, the above result suggests
that $k' = 1$ but in practice $k'$ is the smallest integer such that $\GBElimLin(H\cup E, D)$
($E = \{x_1 - a_1,\ldots,x_{k'} - a_{k'}\}$) can be performed in a reasonable time
for the given parameter $D > 0$ and for all $(a_1,\ldots,a_{k'})\in\FF^{k'}$.
For polynomial systems with many variables, such as the ones arising in cryptography,
the first step $1\leq k'\leq n$ will be a not so small integer.
We call {\em full multistep strategy} the one corresponding to the minimum value of $C_2$
which has the following complexity formulas
\begin{equation}
\begin{gathered}
C_1 = \sum_{k'\leq k\leq k''} p_B(k-1) q^k; \\
C_2 = \sum_{k'\leq k\leq k''} (p_B(k-1) - p_B(k)) q^k,
\end{gathered}
\end{equation}
where $p_B(k'') = 0$ and by convention $p_B(k'-1) = 1$.

Of course, each summand of the above complexity formulas should be multiplied
by suitable average timings in order to obtain running times. Indeed, we have
the following corresponding total running times
\begin{equation}
\label{runtime}
\begin{gathered}
T_1 = \sum_{k'\leq k\leq k''} \sigma(k-1,k) p_B(k-1) q^k; \\
T_2 = \sum_{k'\leq k\leq k''} \tau(k-1,k) (p_B(k-1) - p_B(k)) q^k,
\end{gathered}
\end{equation}
where $\sigma(k-1,k)$ is the average running time of \GBElimLin\ for the extensions
of wild $(k-1)$-guesses to $k$-guesses and $\tau(k-1,k)$ is the average running time
of \GB\ for the extensions of wild $(k-1)$-guesses to tamed $k$-guesses.
Note that for the first step $k'$ we have the timings $\sigma(k'-1,k')$ and
$\tau(k'-1,k')$ that, by convention, correspond to all $k'$-guesses and all
tamed $k'$-guesses, respectively. The total execution time of \MultiSolve\
is therefore $T = T_1 + T_2$.

\medskip
We remark that in our experimental studies, which we will detail in the next sections,
we observe that the probabilities $p_B(k)$ ($k'\leq k\leq k''$) appear to be a reliable
statistical data about a multivariate polynomial system. These data can be safely used
therefore to estimate the complexity of solving a polynomial system by the multistep algorithm
\MultiSolve, as well to tune it to an optimal complexity by varying its main parameters,
namely the integers $D,B$ and the subset of $k''$ variables to evaluate.
Note also that in the algorithm \MultiSolve\ one can possibly substitute \GB\ solver
with any other polynomial system solver over finite fields, such as XL solvers,
SAT solvers and so on. The only step where an (incomplete) \Gr\ basis computation
is essential in \MultiSolve\ is in the subroutine \GBElimLin\ where this is used
to generate additional linear equations starting from variable evaluations.
An alternative way may be to use the linear algebra algorithm \ElimLin\ \cite{Ba,CB}.
We also recall that the probabilities $p_B(k)$ and hence the multistep complexities
$C_1,C_2$ are obtained by performing \GBElimLin\ over some testset $T\subset \FF^k$,
for all $k'\leq k\leq k''$.

We finally notice that if $V_\FF(J)\neq\emptyset$ and we can estimate that the solution
is found on the average by means of a tamed (correct) $l$-guess with $l\leq k < k''$,
we can essentially consider $k$ as a final step for \MultiSolve\ even if $p_B(k)\neq 0$.
This generally implies a consistent reduction of the complexities $C_1,C_2$ in this
average case. We may have such an estimation, for instance, when studying polynomial
systems arising from the algebraic cryptanalysis of a cryptosystem by using a testset
of many different keys.

To show that the algorithm \MultiSolve\ can be practically used in the context
of cryptography, in the next sections we introduce and attack the well-known stream
cipher \Trivium\ \cite{DCP}. In particular, we present this cryptographic scheme
and its cryptanalysis in the general framework provided by the notion of ``difference
stream cipher'' that has been recently introduced in the papers \cite{LSPTV,LST}.

\section{Stream ciphers}

Stream ciphers are essentially practical realizations of pseudorandom functions
(see, for instance, \cite{AA,KaLi,MPS}). At a high level, on input of two fixed-length
random strings usually named {\em seed} and {\em initialisation vector}, a stream cipher
produces a random-looking string of arbitrary length which is called the {\em keystream}.
The characters in these strings are generally elements of a finite field and they are most
commonly bits. To obtain a symmetric cipher, the keystream can be used to encrypt a stream
of plaintexts or decrypt a stream of ciphertexts simply by addition or subtraction.
In this case, the seed and the initialization vector are respectively the key and the
nonce of the cipher. A stream cipher is deemed secure if it is indistinguishable
from a proper random function.

Most of the known constructions for stream ciphers rely on the use of feedback shift
registers - FSR, in short. An FSR is an array of memory cells, with each cell storing
a single element of a given finite field $\FF = \GF(q)$, which are updated by means of
some function $f$. More precisely, within an update the values of the cells are shifted
to the left while the right-most cell gets as a new value the image under the function $f$
of the current values of a given subset of cells of the same register and possibly that
of some other FSRs. For finite fields, the function $f$ can be always converted into
a multivariate polynomial and it is called therefore {\em update polynomial}.
An FSR is called {\em linear} or {\em non-linear} is the polynomial $f$ is linear
or not, respectively.

A stream cipher $\cC$ can be obtained by combining a few FSRs $\cF_1,\ldots,\cF_n$,
their corresponding update polynomials $f_1,\dots,f_n$ and a further multivariate
polynomial $g$ with coefficients in $\FF$ which is called {\em keystream polynomial}.
Indeed, the keystream of $\cC$ is obtained by evaluating the polynomial $g$ over the current
values of the registers of the $\cF_i$. In particular, the memory cells of the FSRs are
initially filled with the elements of a seed, an initialisation vector and, possibly,
a constant string. Then, the $\cF_i$ are updated in parallel $(h + u)$-times, where $h$
is the intended length of the keystream and $u$ is the number of updates (including initialization)
in the {\em warm-up stage}. This stage places some distance between the initialization
of the FSRs and the output of the keystream in order to prevent attacks on the stream
cipher that are based on the knowledge of some amount of elements in the keystream. 

A natural language for describing stream ciphers obtained from FSRs is that of explicit
difference equations, as first observed in \cite{LST} where such stream ciphers
are called {\em difference stream ciphers}. For the sake of readability, we recall
some notations before their formal definition (see also \cite{GLS}).

Given a finite field $\FF = \GF(q)$ and $n\in\NN^* = \NN\setminus\{0\}$, we denote
by $R$ the polynomial algebra $\FF[X]$ in the infinite set of variables
$X = \bigcup_{t\in\NN} \{x_1(t),\dots,x_n(t)\}$. The algebra endomorphism
$\sigma: R\rightarrow R$ defined by putting $\sigma(x_i(t)) = x_i(t+1)$ ($1\leq i\leq n, t\geq 0$)
is called the {\em shift map} of $R$. For any $r_1,\dots,r_n\in\NN$, the following
subset of variables 
\[
\bar{X} = \{x_1(0),\dots,x_1(r_1-1),\dots,x_n(0),\dots,x_n(r_n-1))\}
\]
defines a (finitely generated) subalgebra $\bR = \FF[\bX]\subset R$.

\begin{definition}
\label{DiffStream}
A {\em difference stream cipher} $\cC$ is a system of (algebraic ordinary)
explicit difference equations of the form
\begin{equation}
\label{dsys}
\left\{
\begin{array}{ccc}
x_1(r_1 + t) & = & \sigma^t(f_1), \\
& \vdots \\
x_n(r_n + t) & = & \sigma^t(f_n), \\
\end{array}
\right.
\quad (t\in\NN)
\end{equation}
together with a polynomial $g\in \bR$, where $r_1,\ldots,r_n\in\NN$ and $f_1,\ldots,f_n\in\bR$.
\end{definition}

An {\em $\FF$-solution of the system $(\ref{dsys})$} is a solution of all its explicit difference
equations, that is, an $n$-tuple $(a_1,\ldots,a_n)$ of maps from $\NN$ to $\FF$ such that,
given $v(t) = (a_1(t),\dots,a_1(r_1-1+t),\dots,a_n(t),\dots,a_n(r_n-1+t))$, it holds that
$a_i(r_i + t) = f_i(v(t))$, for every integer $t \geq 0$. We call $v(t)$ the {\em $t$-state} of
$(a_1,\dots,a_n)$ and in particular $v(0)$ is the {\em initial state}. The system (\ref{dsys})
has a unique $\FF$-solution once the initial state is fixed \cite[Thm.~2.4]{LST}.

\begin{remark}
A difference stream cipher $\cC$ defined by an explicit difference system $(\ref{dsys})$
and a polynomial $g \in \bR$ can be realized as a stream cipher obtained from a set
$\cF_1,\ldots,\cF_n$ of FSRs having update polynomials $f_1,\ldots,f_n$ and keystream
polynomial $g$. In particular, the variable $x_i(t)$ $(1\leq i\leq n)$ corresponds to the
symbolic value of the left-most cell of the register of $\cF_i$ at the clock $t\geq 0$.
If $(a_1,\ldots,a_n)$ is an $\FF$-solution of $(\ref{dsys})$, its initial state is composed
by a seed, an initialisation vector and, possibly, a constant string. If $v(t)\in\FF^r$
$(r = r_1 +  \ldots + r_n)$ is the $t$-state of $(a_1,\ldots,a_n)$, the function $b:\NN\to\FF$
such that $b(t) = g(v(t))$ for all $t\geq 0$, is called the {\em keystream of $(a_1,\ldots,a_n)$}.
By design, the $j$-th bit of the keystream can be set equal to $b(u+j-1)$, with $u$ being
the number of warm-up updates of the stream cipher. Vice versa, a stream cipher obtained
from some FSRs can be formally described as a difference stream cipher. 
\end{remark}

\section{Algebraic attacks}

As already mentioned, a secure stream cipher $\cC$ can be used to construct a CPA-secure symmetric
encryption scheme $\Pi = (\mathsf{Gen}, \mathsf{Enc}, \mathsf{Dec})$ where CPA stands for Chosen
Plaintext Attack (see \cite{KaLi}). In particular, the key-generation algorithm $\mathsf{Gen}$
takes in input a security parameter $\lambda \in \NN$ and returns a random seed $s$ as the key.
The encryption algorithm $\mathsf{Enc}$, on input of a seed $s$ and a message $m$, samples
a uniform initialisation vector $\mathrm{IV}$ and outputs the ciphertext $c = (\mathrm{IV},
m + \mathsf{str}(|m|))$, where $\mathsf{str}(|m|)$ denotes the keystream produced by $\cC$
on input of $s$ and $\mathrm{IV}$ which is truncated after $|m|$ elements (here $m$ is a string
of elements of the finite field $\FF$ and $|m|$ denotes its length). Decryption is then performed
by subtraction of the keystream $\mathsf{str}(|m|)$.

CPA-security is assessed by means of a security game where the adversary is granted access
to an encryption oracle. Therefore, the adversary has access to a polynomial number (in the
security parameter $\lambda$) of strings $(\mathrm{IV}, \mathsf{str}(|m|))$ with the seed $s$
fixed and $\mathrm{IV}$ and $m$ which vary. Such knowledge can then be used by the adversary
to pursue a key-recovery attack where he attempts to deduce information about the secret seed $s$. 

As a consequence, if $\cC$ is a difference stream cipher, the adversary has access to some
string $(b(u),\dots,b(u+h-1))$ where $h\in\NN$ is polynomial in $\lambda$ and $b$ is the keystream
of a given $\FF$-solution $(a_1,\dots,a_n)$ of the system (\ref{dsys}). This implies the knowledge
of the generators of the ideal
\[
J_{u,h} = \sum_{u \leq t < u + h} \langle \sigma^t(g) - b(t) \rangle \subset R.
\] 
We denote by $V_\FF(J_{u,h})$ the set of simultaneous $\FF$-solutions of the generators of $J_{u,h}$. 
Given the subset $S = \{ x_1(r_1) - f_1,\ldots,x_n(r_n) - f_n \} \subset R$ and the ideal
\[
I = \langle \sigma^t(f)\mid f \in S, t \geq 0 \rangle \subset R
\]
we consider $V_\FF(I + J_{u,h}) = V_\FF(I)\cap V_\FF(J_{u,h})$, where $V_\FF(I)$ is the set
of all $\FF$-solutions of the difference system $(\ref{dsys})$. Since the function $b$ is the keystream
of a given $\FF$-solution $(a_1,\dots,a_n)\in V_\FF(I)$, we have that
$(a_1,\ldots,a_n)\in V_\FF(I + J_{u,h})\neq\emptyset$. Indeed, for actual stream ciphers
one has that $\#(V_\FF(I + J_{u,h})) = 1$ for some sufficiently large $h\in \NN$. In other words,
there is a unique solution of $(\ref{dsys})$ that is compatible with a sufficiently long keystream.

Denote by $\bV_\FF(I + J_{u,h})\subset \FF^r$ the set of the initial states of the $\FF$-solutions
$(a_1,\ldots,a_n)\in V_\FF(I + J_{u,h})$. The following result is essentially a consequence
of Proposition \ref{elim}.

\begin{theorem}\cite[Thm.~5.4]{LST}
\label{keyeq}
Let $\bT:\bR\to\bR$ be the algebra endomorphism defined for all $1\leq i\leq n$ as follows
\[
\begin{array}{rcl}
x_i(0) & \mapsto & x_i(1), \\
         & \vdots  & \\
x_i(r_i-2) & \mapsto & x_i(r_i-1), \\
x_i(r_i-1) & \mapsto & f_i. \\
\end{array}
\]
Moreover, define the ideal
\[
J'_{u,h} = \sum_{u \leq t < u+h} \langle \bT^t(g) - b(t) \rangle\subset \bR.
\]
Then, it holds that $\bV_\FF(I + J_{u,h}) = V_\FF(J'_{u,h})$. 
\end{theorem}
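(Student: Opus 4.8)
The plan is to read this statement as an instance of the elimination of explicit equations in Proposition~\ref{elim}, applied to the ``higher'' variables $x_i(t)$ with $t\geq r_i$, which the difference equations generating $I$ express in terms of the base variables in $\bX$. Concretely, I would introduce the elimination homomorphism $\psi:R\to\bR$ determined by $x_i(t)\mapsto\bT^t(x_i(0))$ for all $i$ and $t\geq 0$. Since $\bT$ maps $\bR$ into itself, each $\bT^t(x_i(0))$ lies in $\bR$, so $\psi$ is well defined; moreover $\psi$ restricts to the identity on $\bR$ (because $\bT^t(x_i(0))=x_i(t)$ for $0\leq t\leq r_i-1$), and it plays the role of the map $\psi$ of Proposition~\ref{elim0}, with the higher variables being eliminated via the explicit relations in $S$.

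The crucial step is the compatibility $\psi\circ\sigma=\bT\circ\psi$, equivalently $\sigma^t(f)\equiv\bT^t(f)\pmod I$ for every $f\in\bR$ and $t\geq 0$. First I would check this on the generators of $\bR$: for $0\leq s\leq r_i-2$ one has $\sigma(x_i(s))=x_i(s+1)=\bT(x_i(s))$, while for $s=r_i-1$ the two images differ exactly by $x_i(r_i)-f_i\in S\subset I$; since $\sigma$ and $\bT$ are algebra homomorphisms agreeing modulo $I$ on all generators of $\bR$, this gives $\sigma(f)\equiv\bT(f)\pmod I$ for every $f\in\bR$. Iterating then yields $\sigma^t(f)\equiv\bT^t(f)\pmod I$: the inductive step writes $\sigma^{t+1}(f)=\sigma(\sigma^t(f))$, replaces $\sigma^t(f)$ by $\bT^t(f)\in\bR$ up to an element of $I$, and uses both the generator computation above and the $\sigma$-invariance of $I$ (immediate, as $\sigma$ shifts each generator $\sigma^t(f)$ of $I$ to $\sigma^{t+1}(f)$). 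Applying this to $g\in\bR$ gives $\sigma^t(g)\equiv\bT^t(g)\pmod I$, hence $\psi(J_{u,h})=J'_{u,h}$ and $I+J_{u,h}=I+J'_{u,h}$ as ideals of $R$.

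It then remains to pass from ideals to the stated $\FF$-solution sets. Here I would use that every initial state $v(0)\in\FF^r$ extends to a unique $\FF$-solution of $(\ref{dsys})$~\cite[Thm.~2.4]{LST}, so that taking initial states is a bijection $\beta:V_\FF(I)\to\FF^r$. For a solution $a=(a_1,\dots,a_n)\in V_\FF(I)$ the congruence above gives $\sigma^t(g)(a)=\bT^t(g)(a)=\bT^t(g)(v(0))$, the last equality because $\bT^t(g)\in\bR$ depends only on the base coordinates $v(0)$. Therefore $a\in V_\FF(I)\cap V_\FF(J_{u,h})$ if and only if $v(0)\in V_\FF(J'_{u,h})$, and applying $\beta$ identifies $\bV_\FF(I+J_{u,h})$ with $V_\FF(J'_{u,h})$, exactly as in the extension/restriction dictionary of Proposition~\ref{elim}. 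The main obstacle I anticipate is this middle step: establishing the commutation $\sigma^t\equiv\bT^t\pmod I$ cleanly despite the fact that $\sigma$ does not preserve $\bR$ whereas $\bT$ does, and doing so in the ring $R$ of infinitely many variables, where Proposition~\ref{elim} cannot be invoked verbatim. It is precisely the $\sigma$-invariance of $I$ together with the unique-extension property that let the finite-dimensional elimination picture survive in this difference setting.
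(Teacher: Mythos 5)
Your argument is correct, and it follows exactly the route the paper itself indicates: the theorem is quoted from \cite[Thm.~5.4]{LST} without proof, prefaced only by the remark that it is ``essentially a consequence of Proposition~\ref{elim}'', which is precisely the elimination mechanism you instantiate via $\psi$, the congruence $\sigma^t\equiv\bT^t\pmod I$, and the unique-extension bijection between initial states and $\FF$-solutions. No discrepancy to report.
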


Under the reasonable assumption that for a sufficiently large number $h$ of key\-stream elements
there is a unique $\FF$-solution $(a_1,\dots,a_n)$ compatible with them, any attack which aims
at determining the initial state of $(a_1,\dots,a_n)$, that is, $V_\FF(J'_{u,h})$ is said to be
an {\em algebraic (key-recovery) attack}. Note that the case in which we have spurious solutions,
that is, $\# V_\FF(J'_{u,h}) > 1$ can be identified by means of a \Gr\ basis computation, namely by
the linear dimension
\[
\# V_\FF(J'_{u,h}) = \dim_\FF \bR/(J'_{u,h} + L)
\]
where $L\subset\bR$ is the field equation ideal corresponding to the polynomial algebra $\bR$.
Recall in fact that $V_\FF(J'_{u,h}) = V(J'_{u,h} + L)$. To compute this solution set,
one can use \Gr\ bases according to Proposition \ref{uniqueGB} or any other kind of solver.
However, if the update polynomials $f_i$ are non-linear ones, the polynomial $\bT^t(g)$ may have
a high degree for large values of the clock $t$. Since $t\geq u$ in the definition of $J'_{u,h}$,
this occurs whenever the number $u$ of warm-up updates of the stream cipher $\cC$ is large,
as it is usually the case. High degrees surely affect the \Gr\ basis computation for $J'_{u,h} + L$
and hence a long warm-up stage is a good security strategy also with respect to algebraic attacks.

Note that the initial state usually contains the seed, the initial vector and some constant
elements as well, and therefore it would seem desirable to attack it since the actual unknown
entries are less than the total length of the registers. However, for the special class of invertible
difference stream ciphers, the initial state can be uniquely recovered by any internal state
and hence it is more convenient to attack the state at clock $u$ where the keystream starts
to output. Indeed, in this case we can assume $u = 0$ and compute $V_\FF(J'_{0,h})$ where
the generators of $J'_{0,h}$ have much lower degrees than those of the generators of $J'_{u,h}$.
Of course, in this case all the entries of the considered initial state are completely unknown.

\begin{definition}
We call an explicit difference system $(\ref{dsys})$ {\em invertible} if the algebra endomorphism
$\bT$ defined in Theorem \ref{keyeq} is actually an automorphism. A difference stream
cipher $\cC$ defined by an explicit difference system is said {\em invertible} accordingly.
\end{definition}

Note that the algebra endomorphism $\bT:\bR\to\bR$ has a corresponding polynomial map
$\T:\FF^r\to\FF^r$ ($r = r_1 +  \ldots + r_n$). If $v(t)\in\FF^r$ is the $t$-state
of a $\KK$-solution $(a_1,\ldots,a_n)$ of $(\ref{dsys})$ we have that $\T(v(t)) = v(t+1)$,
for all clocks $t\geq 0$. We call $\bT$ the {\em state transition endomorphism} and $\T$
the {\em state transition map of the explicit difference system $(\ref{dsys})$}. If $\bT$
is an automorphism, the map $\T$ is also invertible and we have that $\T^{-1}(v(t+1)) = v(t)$.
This implies that for invertible difference stream ciphers, we can uniquely obtain
the initial state $v(0)$ starting from any internal state $v(t) = \T^t(v(0))$.

To establish invertibility and compute the inverse of an algebra automorphism one has
the following general result based on \Gr\ bases theory. For a comprehensive
reference about automorphisms of polynomial algebras we refer to the book \cite{VDE}.

\begin{theorem}
\label{invth}
Let $\KK$ be any field and let $X = \{x_1,\ldots,x_r\}, X' = \{x'_1,\ldots,x'_r\}$ be two
disjoint variable sets. Define the polynomial algebras $P = \KK[X], P' = \KK[X']$
and $Q = \KK[X\cup X']$. Consider an algebra endomorphism $\varphi:P\to P$ such that
$x_1\mapsto g_1,\ldots,x_r\mapsto g_r$ $(g_i\in P)$ and the corresponding ideal
$J\subset Q$ which is generated by the set $\{x'_1 - g_1,\ldots, x'_r - g_r\}$.
Moreover, endow the polynomial algebra $Q$ by a product monomial ordering such that
$X\succ X'$. Then, the map $\varphi$ is an automorphism of $P$ if and only if
the reduced \Gr\ basis of $J$ is of the kind $\{x_1 - g'_1,\ldots,x_r - g'_r\}$
where $g'_i\in P'$, for all $1\leq i\leq r$. In this case, if $\varphi':P'\to P'$
is the algebra endomorphism such that $x'_1\mapsto g'_1, \ldots, x'_r\mapsto g'_r$
and $\xi: P\to P'$ is the isomorphism $x_1\mapsto x'_1, \ldots, x_r\mapsto x'_r$,
we have that $\xi\, \varphi^{-1} = \varphi'\, \xi$.
\end{theorem}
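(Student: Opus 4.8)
The plan is to exploit that $J$ is the \emph{graph ideal} of $\varphi$, which yields a canonical isomorphism independent of whether $\varphi$ is invertible, and then to read off invertibility and the inverse from the shape of the reduced \Gr\ basis. Concretely, let $\pi:Q\to P$ be the algebra homomorphism $x_i\mapsto x_i$, $x'_i\mapsto g_i$. First I would check that $\Ker\pi = J$: every $f\in Q$ reduces modulo the generators $x'_i - g_i$ to some $\tilde f\in P$ by substituting $x'_i\mapsto g_i$, and $\pi(f)=\tilde f$, so $f\in\Ker\pi$ forces $\tilde f = 0$ and hence $f\in J$. Thus $\bar\pi:Q/J\xrightarrow{\sim}P$ always. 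I would also record that, since the product order has $X\succ X'$, it is an elimination order for $X$, so the elements of a \Gr\ basis of $J$ lying in $P'$ form a \Gr\ basis of $J\cap P'$.

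For the direction ``$\varphi$ automorphism $\Rightarrow$ stated shape'', set $g'_i := \xi(\varphi^{-1}(x_i))\in P'$. Writing $\varphi^{-1}(x_i)=h_i(X)$, one has $h_i(g_1,\dots,g_r)=x_i$, whence $g'_i\equiv x_i \pmod J$ and so $x_i - g'_i\in J$. The key step is to certify that $G=\{x_i - g'_i\}$ is exactly the reduced \Gr\ basis. Because $X\succ X'$, each $x_i - g'_i$ has leading monomial $x_i$; these are pairwise coprime distinct variables, so by Buchberger's first (coprime leading term) criterion every S-polynomial reduces to zero and $G$ is a \Gr\ basis. Reducedness is immediate: each element is monic and its only monomial divisible by some leading monomial is its own leading term $x_i$, the remaining monomials lying in $P'$. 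A short substitution argument (the one used to prove $\Ker\pi=J$, applied now via the map $x_i\mapsto g'_i$, $x'_i\mapsto x'_i$) shows that $G$ generates $J$.

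For the converse, assume the reduced \Gr\ basis is $G=\{x_i - g'_i\}$ with $g'_i\in P'$. The leading monomials are the variables $x_i$, so the standard monomials are exactly the monomials in $X'$, and the inclusion $P'\hookrightarrow Q$ followed by the quotient map gives an algebra isomorphism $\iota:P'\xrightarrow{\sim}Q/J$: surjectivity holds because every normal form lies in $P'$, and injectivity because $J\cap P'=\{0\}$ (the reduced \Gr\ basis has no element in $P'$, so $G\cap P'=\emptyset$ is a \Gr\ basis of $J\cap P'$). Composing, $\bar\pi\circ\iota:P'\to P$ is an isomorphism sending $x'_i\mapsto g_i$; since this map equals $\varphi\circ\xi^{-1}$, I conclude that $\varphi=(\bar\pi\circ\iota)\circ\xi$ is an automorphism. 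Applying $\pi$ to $x_i - g'_i\in J$ gives $g'_i(g_1,\dots,g_r)=x_i$, i.e.\ $\varphi^{-1}(x_i)=\xi^{-1}(g'_i)$, which reconciles the two descriptions of $g'_i$.

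Finally, the identity $\xi\,\varphi^{-1}=\varphi'\,\xi$ follows by evaluating both algebra homomorphisms $P\to P'$ on the generators: $\xi(\varphi^{-1}(x_i))=g'_i$ by the previous step, while $\varphi'(\xi(x_i))=\varphi'(x'_i)=g'_i$. I expect the main obstacle to be the \Gr-basis identification in the forward direction; the coprimality of the leading monomials is precisely what makes Buchberger's criterion apply at once, and the elimination-order observation is what pins down $J\cap P'=\{0\}$ in the converse.
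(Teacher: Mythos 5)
The paper states Theorem \ref{invth} without proof, referring the reader to \cite{VDE} for background on polynomial automorphisms, so there is no in-paper argument to compare against; your proposal supplies a complete and correct proof. The structure is sound: identifying $J$ as the graph ideal, i.e.\ $J=\Ker\pi$ for $\pi:x_i\mapsto x_i,\ x'_i\mapsto g_i$, gives $Q/J\cong P$ unconditionally; in the forward direction the candidate set $\{x_i-g'_i\}$ with $g'_i=\xi(\varphi^{-1}(x_i))$ has pairwise coprime leading monomials under the block order, so Buchberger's first criterion plus the evident reducedness settle the \Gr-basis claim once the set is known to generate $J$; in the converse, the elimination property of the order gives $J\cap P'=0$, the normal forms give $P'\cong Q/J$, and composing with $\bar\pi$ exhibits $\varphi=(\bar\pi\circ\iota)\circ\xi$ as an isomorphism. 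One spot deserves a sentence more than you give it: to conclude that $\{x_i-g'_i\}$ generates all of $J$ you must verify $\rho(x'_i-g_i)=0$ for your substitution map $\rho:x_i\mapsto g'_i,\ x'_i\mapsto x'_i$, i.e.\ the identity $g_i(g'_1,\dots,g'_r)=x'_i$. This is $\xi$ applied to $\varphi^{-1}(\varphi(x_i))=x_i$, so it relies on $\varphi^{-1}\circ\varphi=\id$ and not merely on the identity $h_i(g_1,\dots,g_r)=x_i$ (which encodes $\varphi\circ\varphi^{-1}=\id$) that you state explicitly. Since $\varphi$ is assumed to be an automorphism both identities hold, so this is an expository rather than a mathematical gap.
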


The invertibility property is satisfied, for instance, by the stream cipher \Trivium\
on which we will experiment our algorithm \MultiSolve\ by performing an algebraic attack.
Note that the notion of inverse system of an invertible explicit difference system can be
introduced in a natural way \cite[Def.~3.6]{LST}. Moreover, invertible systems (having
independent subsystems) can be used to define block ciphers \cite[Def.~5.6]{LST}.

\section{Trivium}

The stream cipher \Trivium\ was proposed by De Canni\`ere and Preneel in 2005. It was submitted
to the eSTREAM competition and therein selected as part of the final portfolio. We recall that
eSTREAM was an European project which aimed at identifying new secure and efficient stream ciphers.
Despite a wide cryptanalytic effort, best known attacks against \Trivium\ are still significantly
slower than brute-force attacks. \Trivium\ is a difference stream cipher whose corresponding explicit
difference system consists only of three quadratic equations over the base field $\FF = \GF(2)$,
namely
\begin{equation}
\label{tri}
\left\{
\begin{array}{r@{\ }c@{\ }l}
x(93+t) & = & z(t) + x(24+t) + z(45+t) + z(1+t)z(2+t), \\
y(84+t) & = & x(t) + y(6+t) + x(27+t) + x(1+t)x(2+t), \\
z(111+t) & = & y(t) + y(15+t) + z(24+t) + y(1+t)y(2+t). \\
\end{array}
\right.
\quad (t\in\NN)
\end{equation}
Its keystream polynomial is the following homogeneous linear polynomial
\begin{equation}
\label{trikpol}
g  =  x(0) + x(27) + y(0) + y(15) + z(0) + z(45).
\end{equation}
Consequently, a $t$-state is a string of $288 = 93 + 84 + 111$ bits, for any $t\geq 0$.
The number of warm-up updates is $u = 4\cdot 288 = 1152$. Seeds (\ie\ private keys) and
initialisation vectors are 80-bit strings and together they form 160 bits of an initial state.
The remaining 128 bits are constants.

If $\bR = \FF[x(0),\ldots,x(92),y(0),\ldots,y(83),z(0),\ldots,z(110)]$, the state transition
endomorphism $\bT:\bR\to\bR$ of \Trivium\ is therefore the following one
\begin{equation}
\label{triauto}
\small
\begin{array}{l}
x(0)\mapsto x(1),\ldots,x(91)\mapsto x(92), x(92)\mapsto z(0) + x(24) + z(45) + z(1)z(2), \\
y(0)\mapsto y(1),\ldots,y(82)\mapsto y(83), y(83)\mapsto x(0) + y(6) + x(27) + x(1)x(2), \\
z(0)\mapsto z(1),\ldots,z(109)\mapsto z(110), z(110)\mapsto y(0) + y(15) + z(24) + y(1)y(2). \\
\end{array}
\end{equation}
By means of Theorem \ref{invth} we obtain that $\bT$ is in fact an automorphism whose
inverse $\bT^{-1}:\bR\to\bR$ is defined as
\begin{equation}
\small
\begin{array}{l}
x(92)\mapsto x(91),\ldots,x(1)\mapsto x(0), x(0)\mapsto y(5) + x(26) + y(83) + x(0)x(1), \\
y(83)\mapsto y(82),\ldots,y(1)\mapsto y(0), y(0)\mapsto y(14) + z(23) + z(110) + y(0)y(1), \\
z(110)\mapsto z(109),\ldots,z(1)\mapsto z(0), z(0)\mapsto x(23) + z(44) + x(92) + z(0)z(1). \\
\end{array}
\end{equation}
The invertibility of the stream cipher \Trivium\ allows therefore an algebraic attack on
its $u$-state that we have analyzed by means of the algorithm \MultiSolve. In fact,
the high number of variables, namely 288, makes impossible to solve the corresponding polynomial
system by a single \Gr\ basis computation. Our multistep strategy divides instead
this solving task in many subproblems, whose number can be easily estimated, where the number
of variables is bounded at will.

\section{A multistep attack on Trivium}

In this section we essentially introduce the reader to the practical use of the
complexity analysis of \MultiSolve\ contained in Section 4. We make use to this
purpose of an algebraic attack on the stream cipher \Trivium\ whose computational
cost we are able to estimate. This cryptanalysis is an instance of an algebraic attack
on an internal state of an invertible difference stream cipher as described in Section 6.
Recall that for \Trivium\ we consider the polynomial algebra
$\bR = \FF[x(0),\ldots,x(92),y(0),\ldots,y(83),z(0),\ldots,z(110)]$ ($\FF = \GF(2)$)
and the corresponding field equation ideal $L\subset \bR$.
We apply \MultiSolve\ to the generating set $H$ of the ideal $J'_{0,h} + L\subset \bR$
where
\[
J'_{0,h} = \sum_{0 \leq t < h} \langle \bT^t(g) - b(t) \rangle\subset \bR.
\]
Here $\bT:\bR\to\bR$ is the state transition automorphism (\ref{triauto}) of \Trivium,
$g\in\bR$ is its keystream polynomial (\ref{trikpol}) and $b(t)$ ($0\leq t < h$)
is the portion of the keystream that we assume to know for the attack. The goal of the attack
is to compute $V_\FF(J'_{0,h}) = V(J'_{0,h} + L)$, that is, the internal state of \Trivium\
at the keystream initial clock (see Section 6).

In our attack, the length of the keystream is set to $h = 240$ because this is a minimal
amount of data providing at most one solution in all polynomial systems we computed
to analyze \MultiSolve, as well as polynomials of relatively low degree ($\leq 5$) in the
generating set $H$.

A first useful observation is that the generators $\bT^t(g) - b(t)\in H$ ($0\leq t\leq 65$)
are independent linear polynomials and hence we can eliminate immediately 66 variables
by means of them. The remaining set of 222 variables we consider is
\[
\bX' = \{x(0),\ldots,x(92),y(0),\ldots,y(83),z(0),\ldots,z(44)\}.
\]
This set is still too large to actually perform \GBElimLin\ and hence we decide to fix
a set $V$ of 116 variables to be stepwise evaluated. To define the set $V$ we start observing that,
owing to the defining equations of \Trivium, there are three subsets of $\bX'$, each consisting
of 73 variables, whose evaluation results in having many linear polynomials
within the system that needs to be solved. These subsets are the following ones:
\begin{equation*}
\footnotesize
\begin{gathered}
V_0 = \{x(0),y(0),z(0),x(3),y(3),z(3),x(6),y(6),z(6),x(9),y(9),z(9),x(12),y(12),z(12), \\
x(15),y(15),z(15),x(18),y(18),z(18),x(21),y(21),z(21),x(24),y(24),z(24),x(27),y(27),z(27), \\
x(30),y(30),z(30),x(33),y(33),z(33),x(36),y(36),z(36),x(39),y(39),z(39),x(42),y(42),z(42), \\
x(45),y(45),x(48),y(48),x(51),y(51),x(54),y(54),x(57),y(57),x(60),y(60),x(63),y(63),x(66), \\
y(66),x(69),y(69),x(72),y(72),x(75),y(75),x(78),y(78),x(81),x(84),x(87),x(90)\}; \\
\end{gathered}
\end{equation*}
\begin{equation*}
\footnotesize
\begin{gathered}
V_1 = \{x(1),y(1),z(1),x(4),y(4),z(4),x(7),y(7),z(7),x(10),y(10),z(10),x(13),y(13),z(13), \\
x(16),y(16),z(16),x(19),y(19),z(19),x(22),y(22),z(22),x(25),y(25),z(25),x(28),y(28),z(28), \\
x(31),y(31),z(31),x(34),y(34),z(34),x(37),y(37),z(37),x(40),y(40),z(40),x(43),y(43),z(43), \\
x(46),y(46),x(49),y(49),x(52),y(52),x(55),y(55),x(58),y(58),x(61),y(61),x(64),y(64),x(67), \\
y(67),x(70),y(70),x(73),y(73),x(76),y(76),x(79),y(79),x(82),x(85),x(88),x(91)\}; \\
\end{gathered}
\end{equation*}
\begin{equation*}
\footnotesize
\begin{gathered}
V_2 = \{x(2),y(2),z(2),x(5),y(5),z(5),x(8),y(8),z(8),x(11),y(11),z(11),x(14),y(14),z(14), \\
x(17),y(17),z(17),x(20),y(20),z(20),x(23),y(23),z(23),x(26),y(26),z(26),x(29),y(29),z(29), \\
x(32),y(32),z(32),x(35),y(35),z(35),x(38),y(38),z(38),x(41),y(41),z(41),x(44),y(44),z(44), \\
x(47),y(47),x(50),y(50),x(53),y(53),x(56),y(56),x(59),y(59),x(62),y(62),x(65),y(65),x(68), \\
y(68),x(71),y(71),x(74),y(74),x(77),y(77),x(80),y(80),x(83),x(86),x(89),x(92)\}; \\
\end{gathered}
\end{equation*}
The set $V$ is thus chosen by combining one of the above sets with a subset of another,
consisting of 43 variables. Using \GBElimLin, we search for a set $V$ having the lowest average
value of $\NRV$, for a testset of different keys and evaluations. Some amount of experiments
suggests to define $V = V_2\cup V'_0$ where
\begin{equation*}
\footnotesize
\begin{gathered}
V'_0 = \{x(3),y(3),z(3),x(6),y(6),z(6),x(9),y(9),z(9),x(12),y(12),z(12),x(15),y(15),z(15), \\
x(18),y(18),z(18),x(21),y(21),z(21),x(24),y(24),z(24),x(27),y(27),z(27),x(30),y(30),z(30), \\
x(33),y(33),z(33),x(36),y(36),z(36),x(39),y(39),z(39),x(42),y(42),z(42),y(45)\}. \\
\end{gathered}
\end{equation*}

For our multistep strategy, we begin by evaluating 106 variables obtained by removing
the last 10 variables from the set $V'_0$ and we proceed by evaluating one additional
variable at each step. We label these steps with the corresponding number of evaluated
variables, say $k$, and thus we fix $106 \leq k\leq 116$ in our experiments. We apply
\GBElimLin\ with parameter $D$ set to the maximum input degree which is generally equal
to 3. This yields a quite low average running time for \GBElimLin\ which is, for instance,
4.5 seconds for $k = 106$ in our experiments.

For the testing activities, we run our code on a server with the following hardware
configuration: 
\begin{itemize}
\item CPU: 2 x AMD EPYC(TM) 7742;
\item Number of CPU Cores/Threads: 2 x 64 Cores/2 x 128 Threads;
\item Maximum CPU frequency achievable: 3.4GHz;
\item L3 cache: 256Mb;
\item RAM: 2 TB.
\end{itemize}
On this server, we install a Linux distribution as operating system -- Ubuntu 22.04 LTS
-- and \Magma\ version 2.27, a software package designed to solve problems in algebra
and number theory \cite{Magma}. We set \Magma\ to use 16 threads when performing parallel
linear algebra. Our main testset consists of $2^{16}$ different tests, namely $2^{12}$
random guesses of variables for $2^4$ keystreams obtained starting from a same number
of random initial states. We call this set a {\em random testset}. We also use a testset
of $2^{16}$ different correct guesses corresponding to $2^{16}$ random initial states
and we call it a {\em correct testset}. Recall that an initial state is here the internal
state of \Trivium\ at the initial clock of the keystream.

The complexity of the algorithm \MultiSolve\ strictly depends on the parameter $B$
which defines the tamed guesses, that is, the \Gr\ bases that are actually computed.
In our experiments, we fix $32\leq B\leq 38$. One reason for this choice is that
we found $p_{32}(116) = 0$, that is, $k'' = 116$ is the possible last step for $B = 32$
where $116$ is the maximum step considered. Moreover, we have that $\NRV\leq B = 38$
allows the computation of \GB\ by \Magma\ in few minutes at most which is for us
``a reasonable time'' for quite large testsets. Indeed, the solving degree
we generally have in our tests is at most 5. For a number of variables greater
than 38, we also observed that some internal errors occur when performing \GB\
with the parameter ``HFE''. Indeed, we adopted this option in all our computations
as it generally speeds them up. This is mainly because, with this parameter set,
\Magma\ employs the F4 algorithm with parallel computations on dense Macaulay matrices.
For this reason as well, namely to ensure completely successful and fast computations,
we chose to set $B\leq 38$.

The main statistics for computing the complexity of \MultiSolve\ are the probabilities
$p_B(k)$ for $B$ and $k$ in the considered intervals. To this purpose, we actually ran
two independent random testsets on our server finding no significative differences
between the two statistics obtained. Recall that these testsets consist each
of $2^{12}$ random guesses for $2^4$ random initial states.

\suppressfloats[b]
\begin{table}[ht!]
\caption{Probabilities $p_B(k)$ estimated on our {\em random} testsets.}
\begin{tabular}{|c|l|l|l|l|l|l|l|}
\hline
$k / B$
& \multicolumn{1}{|c|}{32} & \multicolumn{1}{|c|}{33} & \multicolumn{1}{|c|}{34}
& \multicolumn{1}{|c|}{35} & \multicolumn{1}{|c|}{36} & \multicolumn{1}{|c|}{37}
& \multicolumn{1}{|c|}{38} \\
\hline
106 & 0.63153 & 0.61703 & 0.60867 & 0.58736 & 0.55925 &	0.50848	& 0.44923 \\
\hline
107 & 0.61670 & 0.60675 & 0.58188 & 0.55359 & 0.50471 &	0.44638	& 0.38121 \\
\hline
108 & 0.57581 & 0.54169 & 0.49049 & 0.43468 & 0.37077 &	0.31258	& 0.23341 \\
\hline
109 & 0.49127 & 0.43814 & 0.38190 & 0.29349 & 0.23219 &	0.16003	& 0.10303 \\
\hline
110 & 0.43263 & 0.37582 & 0.28784 & 0.22722 & 0.15845 &	0.10179	& 0.04910 \\
\hline
111 & 0.28415 & 0.22218 & 0.14967 & 0.09698 & 0.04762 &	0.02165	& 0.00670 \\
\hline
112 & 0.14362 & 0.08954 & 0.04715 & 0.01627 & 0.00650 &	0.00053	& 0	  \\
\hline
113 & 0.08838 & 0.04610 & 0.01549 & 0.00650 & 0.00053 &	0	& 0	  \\
\hline
114 & 0.01498 & 0.00725 & 0.00053 & 0	    & 0	      &	0	& 0	  \\
\hline
115 & 0.00043 & 0	& 0	  & 0	    & 0	      &	0	& 0	  \\
\hline
116 & 0	      & 0	& 0	  & 0	    & 0	      &	0	& 0	  \\
\hline
\end{tabular}
\end{table}

The Table 1 provide us with the possible final steps $k''$ for each value
of $B$, in the worst case where the correct guess corresponding to the initial state
becomes tamed at such steps.

\suppressfloats[b]
\begin{table}[ht!]
\caption{Final steps $k''$ estimated on our {\em random} testsets (worst case).}
\begin{tabular}{|c|c|c|c|c|c|c|c|}
\hline
B      & 32  & 33  & 34  & 35  & 36  & 37  & 38 \\
\hline
$k''$ & 116 & 115 & 115 & 114 & 114 & 113 & 112 \\
\hline
\end{tabular}
\end{table}

Note that the above final steps gives us essentially the complexity $C = 2^{k''}$
of the one-step strategy, that is, the standard guess-and-determine or hybrid strategy. 
Assuming that the first step $k'$ of the full multistep strategy is set $k' = 106$
for all $B$, we can compare $C$ with main complexity of the algorithm
\MultiSolve, namely
\[
C_2 = \sum_{k'\leq k\leq k''} (p_B(k-1) - p_B(k)) 2^k.
\]
Recall that by convention $p_B(k'-1) = 1$ and Theorem \ref{triumph} implies that
$C > C_2$. The Table 3 and Figure 1 provide a comparison between the one-step
and the multistep strategy for our algebraic attack on \Trivium.

\suppressfloats[b]
\begin{table}[ht!]
\caption{One-step complexity VS Multistep complexity (worst case).}
\begin{tabular}{|c|c|c|c|c|c|c|c|}
\hline
B             & 32  & 33  & 34  & 35  & 36  & 37  & 38 \\
\hline
$\log_2(C)$ & 116 & 115 & 115 & 114 & 114 & 113 & 112 \\
\hline
$\log_2(C_2)$ &
111.63 &
111.13 &
110.47 &
109.93 &
109.37 &
108.85 &
108.29 \\
\hline
\end{tabular}
\end{table}

\suppressfloats[b]
\begin{figure}
\caption{One-step complexity VS Multistep complexity (worst case)}
\centering
\begin{tikzpicture} 
\begin{axis}[ 
        xlabel = B,
        ylabel = {$\log_2(\mathrm{complexity})$},
        ybar, 
        ymin=106,ymax=117, 
        enlarge x limits=0.1, 
        symbolic x coords={32,33,34,35,36,37,38}, 
        xtick=data,
        legend cell align=left,
        legend pos = north east,
        nodes near coords align={vertical}, 
        width=10cm, 
        height=7cm,  
        ymajorgrids=true,
        every node near coord/.append style={
            anchor=north, yshift=-0.07ex, font=\tiny
        },
        tick label style={font=\footnotesize},
        xticklabel style={yshift=-0.25ex},
        grid style=dashed,
] 

\addplot[fill=gray!95] 
coordinates {(32,116) (33,115) (34,115) (35,114) (36,114) (37,113) (38,112)}; 

\addplot[fill=gray!25]
coordinates {
(32,111.63)
(33,111.13)
(34,110.47)
(35,109.93)
(36,109.37)
(37,108.85)
(38,108.29)
}; 
\legend{One-step complexity, Multistep complexity (worst case)} 
\end{axis} 
\end{tikzpicture} 
\end{figure}

To obtain the complexity of \MultiSolve\ in the average case, we need a second
statistics, namely the probabilities $p_B(k)$ in the case of the correct testset.
In order to prevent confusion, let us denote by $\bar{p}_B(k)$ the probabilities
obtained in this way. For the following statistics, we make use of two independent
testsets of $2^{16}$ different initial states each.

\suppressfloats[b]
\begin{table}[ht!]
\caption{Probabilities $\bar{p}_B(k)$ estimated on our {\em correct} testsets.}
\begin{tabular}{|c|l|l|l|l|l|l|l|}
\hline
$k / B$
& \multicolumn{1}{|c|}{32} & \multicolumn{1}{|c|}{33} & \multicolumn{1}{|c|}{34}
& \multicolumn{1}{|c|}{35} & \multicolumn{1}{|c|}{36} & \multicolumn{1}{|c|}{37}
& \multicolumn{1}{|c|}{38} \\
\hline
106 & 0.98145 & 0.96414 & 0.93700 & 0.89612 & 0.83978 & 0.76630 & 0.67615 \\
\hline
107 & 0.96077 & 0.93288 & 0.89172 & 0.83521 & 0.76161 & 0.67128 & 0.56934 \\
\hline
108 & 0.88625 & 0.82880 & 0.75348 & 0.66212 & 0.55914 & 0.44754 & 0.33824 \\
\hline
109 & 0.75659 & 0.66649 & 0.56439 & 0.45430 & 0.34366 & 0.24200 & 0.15723 \\
\hline
110 & 0.66171 & 0.56049 & 0.45132 & 0.34123 & 0.24028 & 0.15625 & 0.08948 \\
\hline
111 & 0.44080 & 0.33250 & 0.23412 & 0.15137 & 0.08672 & 0.04370 & 0.01810 \\
\hline
112 & 0.23505 & 0.15071 & 0.08466 & 0.04102 & 0.01611 & 0.00458 & 0.00078 \\
\hline
113 & 0.14989 & 0.08432 & 0.04094 & 0.01608 & 0.00458 & 0.00078 & 0       \\
\hline
114 & 0.03961 & 0.01570 & 0.00450 & 0.00075 & 0       & 0       & 0       \\
\hline
115 & 0.00381 & 0.00056 & 0       & 0       & 0       & 0       & 0       \\
\hline
116 & 0.00056 & 0       & 0       & 0       & 0       & 0       & 0       \\
\hline
\end{tabular}
\end{table}

We use the above statistics to determine at which step $\bar{k}''$ one has
probability $\bar{p}_B(\bar{k}'') < 0.5$, that is, at least half of the correct
guesses become tamed in a step $l\leq \bar{k}'' < k''$. In other words,
this provides that with probability greater than or equal to $0.5$,
the algorithm \MultiSolve\ will stop at a final step $l\leq \bar{k}''$.

\suppressfloats[b]
\begin{table}[ht!]
\caption{Final steps $\bar{k}''$ estimated on our {\em correct} testsets (average case).}
\begin{tabular}{|c|c|c|c|c|c|c|c|}
\hline
B                 & 32  & 33  & 34  & 35  & 36  & 37  & 38 \\
\hline
$\vphantom{\bar{\bar{M}}} \bar{k}''$ & 111 & 111 & 110 & 109 & 109 & 108 & 108 \\
\hline
\end{tabular}
\end{table}

By considering $k''$ and $\bar{k}''$ as last steps for the algorithm \MultiSolve,
the following Table 6 and Figure 2 provide a comparison between the worst case
complexity $C_2$ and the average case complexity
\[
\bar{C}_2 = \sum_{k'\leq k\leq \bar{k}''} (p_B(k-1) - p_B(k)) 2^k.
\]

\suppressfloats[b]
\begin{table}[ht!]
\caption{Multistep complexity (worst case) VS Multistep complexity (average case).}
\begin{tabular}{|c|c|c|c|c|c|c|c|}
\hline
B                   & 32  & 33  & 34  & 35  & 36  & 37  & 38 \\
\hline
$\log_2(C_2)$ &
111.63 &
111.13 &
110.47 &
109.93 &
109.37 &
108.85 &
108.29 \\
\hline
$\vphantom{\bar{\bar{M}}} \log_2(\bar{C}_2)$ &
108.79 &
108.88 &
107.67 &
107.06 &
107.13 &
106.20 &
106.35 \\
\hline
\end{tabular}
\end{table}

\suppressfloats[b]
\begin{figure}
\caption{Multistep complexity (worst case) VS Multistep complexity (average case)}
\centering
\begin{tikzpicture} 
\begin{axis}[ 
        xlabel = B,
        ylabel = {$\log_2(\mathrm{complexity})$},
        ybar, 
        ymin=106,ymax=112, 
        enlarge x limits=0.1, 
        symbolic x coords={32,33,34,35,36,37,38}, 
        xtick=data,
        legend cell align=left,
        legend pos = north east,
        nodes near coords align={vertical}, 
        width=10cm, 
        height=7cm,  
        ymajorgrids=true,
        every node near coord/.append style={
            anchor=north, yshift=-0.07ex, font=\tiny
        },
        tick label style={font=\footnotesize},
        xticklabel style={yshift=-0.25ex},
        grid style=dashed,
] 

\addplot[fill=gray!95] 
coordinates {
(32,111.63)
(33,111.13)
(34,110.47)
(35,109.93)
(36,109.37)
(37,108.85)
(38,108.29)
}; 

\addplot[fill=gray!25]
coordinates {
(32,108.79)
(33,108.88)
(34,107.67)
(35,107.06)
(36,107.13)
(37,106.20)
(38,106.35)
}; 
\legend{Multistep complexity (worst case), Multistep complexity (average case)} 
\end{axis} 
\end{tikzpicture} 
\end{figure}

We emphasize that the logarithm of our best complexity, namely $\log_2(\bar{C}_2) = 106.2$
for $B = 37$, is very close to the minimum number $k' = 106$ of evaluated variables
that we have chosen for our optimization. This occurs with $p_{37}(106) = 0.5$, that is,
with half of the \Gr\ bases in the step 106 that are not computed because $\NRV > 37$.
Using our code in \Magma\ running on our server, for this average case complexity
we estimate that an algebraic attack on \Trivium\ by \MultiSolve\ takes approximately
$2^{112}$ seconds. This estimation is obtained as $T = T_1 + T_2$ (see formulas
(\ref{runtime})) where the running time of the \Gr\ bases only is $T_2 = 2^{111.5}$
seconds. This clearly shows that the running time $T_1$ of the calls at \GBElimLin\
is irrelevant to the total.


\section{Conclusions and further directions}

In this paper we have shown that the multistep strategy is, in theory and
in practice, a viable way to estimate and improve the complexity of solving
polynomial systems over finite fields. Although the neat stream cipher \Trivium\
remains secure after our algebraic attack based on this solving strategy,
we obtain a relevant reduction to $2^{106.2}$ of the required complexity with
respect to previous attacks of the same kind.

Indeed, the experimentation of the proposed method is only at the beginning
and we believe that a further optimization of the multiple parameters of the
multistep strategy, along with continuous progress in solving algorithms,
will make this technique even more effective as a tool to assess the security
level of a cryptographic system. Of course, the multistep strategy is a general
algorithm that could also be useful outside the cryptographic context.

As a further direction in this line of research, we suggest that the limitation
of the number of remaining variables can be generalized to bounds involving
other critical indicators that a polynomial system could be difficult to solve,
such as the degree of its equations, their density, a theoretical value of
the solving degree and so on. Of course, these choices heavily depend on the kind
of polynomial system solver one wants to use since the multistep strategy
can employ any of them.

To conclude, we believe that the concept of statistical evaluation of the
complexity of a polynomial system over a finite field is promising and requires
further investigation.

\section{Acknowledgements}

First, we would like to thank Lorenzo D'Ambrosio for fruitful discussions
and suggestions. Moreover, we gratefully acknowledge the use of the HPC resources
of {\sc ReCaS-Bari} (University of Bari, INFN).

We would like to express our gratitude to the anonymous referees for their thorough
review of the manuscript. We appreciate all the insightful comments and suggestions,
as they have significantly enhanced the overall readability of the paper.

\end{document}